\newtheorem{theorem}[equation]{Theorem}
\newtheorem{lemma}[equation]{Lemma}
\newtheorem{corollary}[equation]{Corollary}
\title[Optimal Retirement Tontines]{Optimal Retirement Tontines for the 21st Century: \\ With Reference to Mortality Derivatives in 1693}
\author[M.A. Milevsky and T.S. Salisbury]{Moshe A. Milevsky and Thomas S. Salisbury}
\thanks{Milevsky is an Associate Professor of Finance at the Schulich School of Business,
York University, and Executive Director of the IFID Centre. Salisbury is a Professor in the Department of Mathematics and Statistics at York University. The authors acknowledge funding from a Schulich Research Fellowship (Milevsky) and from NSERC (Salisbury), and wish to thank Rejo Peter, Dajena Collaku, Simon Dabrowski and Branislav Nikolic for research assistance. The contact author (Milevsky) can be reached at: milevsky@yorku.ca. } 
\date{28 May 2013 (Draft version 1.6)}				
\begin{document}
\maketitle

\begin{abstract} 

Historical tontines promised enormous rewards to the last survivors at the expense of those who died early. And, while this design \emph{appealed to the gambling instinct}, it is a suboptimal way to manage longevity risk during retirement. This is why fair life annuities making constant payments -- where the insurance company is exposed to the longevity risk -- induces greater lifetime utility. However, tontines do not have to be designed using a winner-take-all approach and insurance companies do not actually sell fair life annuities, partially due to aggregate longevity risk. 

In this paper we derive the tontine structure that maximizes lifetime utility, but doesn't expose the sponsor to any longevity risk. Technically speaking we solve the Euler Lagrange equation and examine its sensitivity to (i.) the size of the tontine pool, (ii.) individual longevity risk aversion, and (iii.) subjective health status. The optimal tontine varies with the individual's longevity risk aversion $\gamma$ and the number of participants $n$, which is problematic for product design. That said, we introduce a structure called a \emph{natural tontine} whose payout declines in exact proportion to the (expected) survival probabilities, which is near-optimal for all $\gamma$ and $n$. We compare the utility of optimal tontines to the utility of loaded life annuities under reasonable demographic and economic conditions and find that the life annuity's advantage over tontines, is minimal. 

We also use our framework to review and analyze the first-ever mortality-derivative issued by the British government, known as {\it King Williams's tontine of 1693}. Although it is widely acknowledged that mortality-derivatives were mis-priced in their early years, it is worth noting that both life annuities and tontines co-existed during that period. We shed light on the preferences and beliefs of those who invested in the tontines vs. the annuities and conclude by arguing that tontines should be re-introduced and allowed to co-exist with life annuities. Individuals would likely select a portfolio of tontines and annuities that suit their personal preferences for consumption and longevity risk, as they did over 320 years ago.

\end{abstract}

\newpage

\begin{quote}
\emph{``...Upon the same revenue more money can always be raised by tontines than by annuities for separate lives. An annuity, with a right of survivorship, is really worth more than an equal annuity for a separate life, and from the confidence which every man naturally has in his own good fortune, the principle upon which is founded the success of all lotteries, such an annuity generally sells for something more than it is worth. In countries where it is usual for government to raise money by granting annuities, \underline{tontines} are upon this account generally \underline{preferred to annuities} for separate lives..."}
\end{quote}

\begin{flushright}
{\bf Adam Smith, The Wealth of Nations, 1776} \\
\end{flushright}

\section{Introduction and Executive Summary}

As policymakers, academics and the public at large grow increasingly concerned about the cost of an aging society, we believe it is worthwhile to go back in time and examine the capital market instruments used to finance retirement in a period before social insurance, defined benefit (DB) pensions and annuity companies. Indeed, in the latter part of the 17th century and for almost two centuries afterwards, one of the most popular retirement investments in the developed world was not a stock, bond or a mutual fund -- although they were available. In fact, the preferred method used by many individuals to generate income in the senior years of the lifecycle was a so-called tontine scheme sponsored by government\footnote{Sources: Weir (1989), Poterba (2005) and McKeever (2009)}. Part annuity, part lottery and part hedge fund, the tontine -- which recently celebrated its 360th birthday -- offered a lifetime of income that increased as other members of the tontine pool died off and their money was distributed to survivors.  

The underlying and original tontine scheme is quite distinct from its public image as a lottery for centenarians in which the longest survivor wins all the money in a pool. In fact, the tontine annuity -- as it is sometimes called -- is more subtle and much more elegant. Imagine a group of 1000 soon-to-be retirees who band together and pool \$1,000 each to purchase a million-dollar U.S. Treasury bond paying 3\% coupons. The bond generates \$30,000 in interest yearly, which is split among the 1000 participants in the pool, for a 30,000 / 1,000 = guaranteed \$30 dividend per member. A custodian holds the big bond -- taking no risk and requiring no capital -- and charges a trivial fee to administer the annual dividends. So far this structure is the basis for all bond funds. Nothing new. But, in a tontine scheme the members agree that if-and-when they die, their guaranteed \$30 dividend is split amongst those who still happen to be alive. 

For example, if one decade later only 800 original investors are alive, the \$30,000 coupon is divided into 800, for a \$37.50 dividend each. Of this, \$30 is the guaranteed dividend and \$7.50 is \emph{other people's money} Then, if two decades later only 100 survive, the annual cash flow to survivors is \$300, which is a \$30 guaranteed dividend plus \$280. When only 30 remain, they each receive \$1,000 in dividends, which mind you, is a 100\% yield in that year alone. The extra payments -- above and beyond the guaranteed \$30 dividend --  are the mortality credits. In fact, under this scheme payments are expected to increase at the rate of mortality, which is a type of super-inflation hedge.

The tontine, of course, differs from a conventional life annuity. Although both offer income for life and pool longevity risk, the mechanics and therefore the cost to the investor are quite different.  The annuity promises predictable guaranteed lifetime payments, but this comes at a cost  -- and regulator-imposed capital requirements -- that inevitably makes its way to the annuitant. Indeed, the actuaries make very conservative assumptions regarding how long annuitants are likely to live, and then spend the next five decades worrying whether they got it right.  In contrast, the tontine custodian divides the variable $X$ (bond coupons received) by the variable $Y$ (participants alive) and sends out checks\footnote{In between the tontine and the annuity lies the \emph{participating annuity} which shares some longevity risk within a pool using a long-term smoothing scheme. Unfortunately, the actual formula for determining the smoothing mechanism is anything but smooth and has little basis in economics. That said, self annuitization schemes (GSA), proposed by Piggott, Valdez and Detzel (2005) have been growing in academic popularity.}. It's cleaner to administer, less capital-intensive and -- in its traditional form -- results in an increasing payment stream over time (assuming you are alive of course). 

In this paper we make the argument that properly designed tontines should be on the menu of products available to individuals as they transition into their retirement years.

\subsection{History vs. The Future}

Lorenzo Tonti was a colorful Italian banker who in the 1650s invented and promoted the scheme which shares his name. He described it as a mixture of lottery and insurance equally driven by old-age fear and economic greed.  They were first introduced in Holland, then very successfully in France\footnote{Source: Jennings and Trout (1982)} and a few decades later, in 1693, England's King William of Orange presided over the first government-issued tontine. The goal was to raise a million pounds to help finance his war against France. For a minimum payment of \pounds 100, participants were given the option of (a) investing in a tontine scheme paying guaranteed \pounds 10 dividends until the year 1700 and then \pounds 7 thereafter, or (b) a life annuity paying \pounds 14 for life, but with no survivorship benefits. This choice -- between two possible ways of financing retirement -- is quite fascinating and the intellectual impetus for this paper.

Over a thousand Englishmen (but very few woman) decided to invest in the tontine. The sums involved weren't trivial. The \pounds 100 entry fee would be worth at least \$100,000 today\footnote{Source: Lewin (2003)}. So, this was no impulse lottery purchase. Rather, some investors picked the tontine because they wanted the skewness -- that is potential for a very large payout if they survived -- while others wanted the more predictable annuity income option. The oldest survivor of King William's tontine of 1693 lived to age 100, earning \pounds 1,000 in her final year -- and no doubt well tended to by her family\footnote{Source: Finlaison (1829)}. More on this intriguing episode, later.

Some might view this entire exercise as a mere historical and intellectual curiosity. After all, tontines are (effectively) illegal in the U.S. and certainly have a nasty taint associated with them. However, we believe that there are some deep and subtle lessons that one can learn about the design of retirement income products in the 21st century, from this episode in financial history. Indeed, the choice between a tontine in which longevity risk is pooled and a life annuity in which payments are guaranteed is a euphemism for many of the choices retirees, and society, now face. 

Once again, our main practical points will be that (i) tontines and life annuities have co-existed in the past, so perhaps they can in the future; (ii) there really is no reason to construct the tontine payout function such that the last survivor receives hundreds of multiples times their initial investment; and (iii) a properly constructed tontine can result in lifetime utility that is comparable to the utility of a life annuity. 

\subsection{Outline of the Paper}

The remainder of this paper is organized as follows. The next section (\ref{1693}) describes the first English tontine of 1693 in great detail, which is one of the earliest opportunities to examine the real-life choice between tontines and life annuities. Section (\ref{theory}) is the theoretical core of the paper, which derives the properties of the optimal tontine structure and compares it with the life annuity.  Section (\ref{numerics}) uses and applies the results from section (\ref{theory}) and addresses how 21st century tontines might be constructed. In section (\ref{lit}) we briefly survey the existing literature which wasn't directly referenced in earlier sections of the paper. Finally, section (\ref{conc}) concludes the paper and section (\ref{apen}) is an appendix which contains all non-essential proofs and derivations.

\section{King William's Tontine of 1693}
\label{1693}

On November 4th, 1692, during the fourth year of the reign of King William and Queen Mary, the British Parliament passed the so-called \emph{Million Act}, which was a rather desperate attempt to raise one million pounds towards carrying on the war against France.  The Million Act specified that any British native or foreigner -- except for French citizens, one might suppose -- could purchase a tontine share from the exchequer for \pounds 100, prior to May 1st, 1693 and thus gain entry into the first British government tontine scheme.

For \pounds 100 an investor could select any nominee of any age -- including the investor himself~-- on whose life the tontine would be contingent. Dividend payments would be distributed to the investor as long as the nominee was still alive. Now, to put the magnitude of the minimal \pounds 100 investment in perspective, the average annual wage of building laborers in England during the latter part of the 17th century was approximately \pounds 16 and a few shillings per year\footnote{Source: Lewin (2003)}. So the entry investment in the tontine pool far exceeded the average industrial wage and the annual dividends alone might serve as a decent pension for a common laborer. It is therefore quite plausible to argue that the 1693 tontine was an investment for the rich and perhaps even one of the first exclusive hedge funds.

This was a simpler structure compared to the original tontine scheme envisioned by Lorenzo de Tonti in the year 1653, which involved multiple classes and different dividend rates. In the 1693 English tontine, each share of \pounds 100 would entitle the investor to an annual dividend of \pounds 10 for seven years (until June 1700), after which the dividends would be reduced to \pounds 7 per share. The 10\% and 7\% tontine dividend rate exceeded prevailing (risk free) interest rates in England at the end of the 17th century, which were officially capped at 6\%\footnote{Source: Homer and Sylla (2005)}. Note the declining structure of the interest payments, which is a preview of our soon-to-come discussion about the optimal tontine payout function.

Interestingly, nowhere within the Statutes of the Realm, which reproduces the Million Act verbatim, was the word tontine actually mentioned and neither was Lorenzo de Tonti ever referenced. Rather, the act stated innocuously: \emph{And so, from time to time upon the death of every nominee, whatsoever share of dividend was payable during the life of such nominee shall be equally divided amongst the rest of the contributors}\footnote{British History Online, \emph{An Act for Granting to Their Majesties certain Rates and Duties of Excise upon Beer, Ale and other Liquors, William and Mary 1692. www.british-history.ac.uk}}.

Although the monarch's plan was to use the money to fund an expensive war, the same act also introduced a new excise tax on beer, ale and other liquors for a period of 99 years. The rates included nine pence for every brewed barrel of ale, six pence for every gallon of brandy and three pence for every hogshead of cider. The intent was to use this tax to cover the dividends on the tontine scheme. This is a very early example of securitization, or an attempt to borrow against future tax revenues.

Moving on to the annuity side of the offering, to further entice investors to participate in the tontine scheme, the act included a unique sweetener or bonus provision. It stipulated that if the entire \pounds 1,000,000 target wasn't subscribed by May 1693 -- thus reducing the size of the lottery payoff for the final survivor -- the investors who had enrolled in the tontine during the six month subscription period (starting in November 1692) would have the option of converting their \pounds 100 tontine shares into a life annuity paying \pounds 14 per year. Under this alternative, the 14\% dividend payments were structured as a conventional life annuity with no group survivorship benefits or tontine features. Think of a single premium life annuity. This option-to-convert is quite intriguing and worthy of further analysis.

Why this extra option was added to the Million Act is unclear. Perhaps the fear amongst the tontine administrators -- and reluctance expressed by potential investors -- was that otherwise few people would subscribe. Indeed, part of what entices speculators to join a tontine is the large potential payout, which accrues to the last survivor.  One requires more participants for a bigger jackpot, but participants aren't likely to join unless the jackpot is big enough to begin with. So, the option-to-convert was likely added to give investors a possible exit strategy if they were disappointed with the number of subscribers. Naturally, an investor can't just cash-out of a tontine pool and walk away with original capital once it is up and running. 

Alas, the funds raised by early May 1963 fell far short of the million pound target. According to records maintained by the Office of the Exchequer, stored within the Archives of the British Library\footnote{Source: Howard, (1694)}, a total of only \pounds 377,000 was subscribed and approximately 3,750 people were nominated\footnote{Note that one nominee could have multiple shares of \pounds 100 contingent on their life.} to the tontine during the period between November 1692 and May 1693. This then triggered the option-to-exchange the tontine into a 14\% life annuity, because, obviously, the \pounds 1,000,000 target wasn't reached.

Interestingly, it seems that a total of 1,013 nominees (representing 1081 tontine shares) remained in the original 10\%/7\% tontine, while the other third elected to convert their tontine shares into a 14\% life annuity contingent on the same nominee. The following table \ref{table01} and table \ref{table02} provide a summary of the number of nominees who stayed in the tontine as well as the number of shares they purchased.

\begin{center}
{\bf Table \ref{table01} and Table \ref{table02} here}
\end{center}
 
In fact, the government passed another act in June 1693, to make the 14\% life annuities available to anyone in a desperate attempt to reach their funding target. They still couldn't raise a million pounds. Remember, interest rates at the time were 6\%, and there were absolutely no restrictions placed on the nominee's age for either the tontine or the annuity.

At first glance, it is rather puzzling why anyone would stay in the tontine pool instead of switching to the life annuity. On a present value basis, a cash flow of \pounds10 for 7 years and \pounds 7 thereafter, is much less valuable compared to \pounds 14 for life. As we shall describe later, the actuarial present value of the 10\%/7\% combination at the 6\% official interest rate was worth approximately \pounds 133 at the (typical nominee) age of ten, whereas the value of the life annuity was worth almost \pounds 185. Remember, the original investment was \pounds 100. It is no surprise that the British government was really losing money on these 14\% annuities\footnote{Source: Finlaison (1829)}. And, they were offering these terms to anyone regardless of how young they were. Why did anyone stay in the tontine pool?

It's not just 21st century financial logic that dictates that a 14\% life annuity would have been a good (and better) deal. In fact, none other than the astronomer Edmond Halley, writing in the January 1693 edition of the \emph{Philosophical Transactions of the Royal Society} just a few months after the passage of the Million Act opined on the matter. He wrote that:

\begin{quote}
\emph{This shows the great advantage of putting money into the present fund granted to their majesties, giving 14\% per annum, of at the rate of 7 years purchase for a life, when [even] the young lives at the annual [6\%] rate of interest, are worth above 13 years purchase.}
\end{quote}

The phrase \emph{years purchase} is an early actuarial term for the number of years before one gets one's entire money back. It's another way of quoting a price. For example, if you receive \pounds 5 per year for life, it takes 20 years to get your \pounds 100 back (ignoring interest), so this would be called 20 years purchase.

A major research question then, is: Why did a full third of the subscribers decide to stay in the tontine pool, while two thirds switched to the life annuity? It's hard to argue that the remaining tontine investors were irrational, ignorant or perhaps taking the lottery ticket approach, while the life annuity investors took the better financial terms. After all, the \pounds 100 was a very large sum of money in 1693. Could differing views on mortality and/or risk aversion explain why some investors switched, while others didn't? Can one provide a rational explanation for this seemingly irrational choice? The framework we introduced in section (\ref{theory}) might help shed light on this decision, or at least rationalize the choice.

Note that although very careful records must have been kept for the purpose of administrating the tontine pool, today we only have access to (i.) a list of every single person who participated in the 1693 tontine, as well as the age of their nominees, (ii.) their status in the year 1730, and (iii.) their status in 1749. These are the three primary documents at our disposal. 

As mentioned in the introduction, the longest living nominee of the 1693 tontine -- who was a mere 10 years of age at the time of initial nomination -- lived to the age of 100, surviving for 90 more years to the year 1783.  She was a female who spent her senior years in Wimbledon and earned a dividend of \pounds 1,081 in her last year of life\footnote{Source: Finlaison (1829).}. Although her payout was capped once seven survivors remained in the tontine pool, her final payment was ten times the original investment of \pounds 100. Remember, this was only one year's worth of dividends! Note that had she switched over to the life annuity -- likely it was her father who made the decisions and nominated her, since she was only 10 years old at the time -- back in 1693 and then lived to 1783, her dividend would have been a mere \pounds 14. 

\begin{center}
{\bf Figure \ref{fig1} and \ref {fig1b} here}
\end{center}

The six month period in between the announcement of the \emph{Million Act} tontine scheme in November 1692 and the date at which the final list was closed in late May and early June 1693 was quite busy and interesting. The London-based promoters of the tontine -- perhaps expecting some sort of commission -- published a table in late 1692 purporting to show a low expected number of survivors and a correspondingly high dividend payout rate over the next 100 years. These initial projections were probably viewed as unattractive because the subscription rate was much, much lower than the 10,000 target. A few months later, in early 1693, the promoters published a follow-up table showing an even lower mortality rate for the group and a correspondingly higher projected dividend payout rate for survivors\footnote{Source: Lewin (2003)}. These two sets of tables, together with the actual experience of the 1013 nominees who remained in the tontine, are displayed graphically in Figure \ref{fig1} and \ref{fig1b}. Note how the actual survival rate far exceeded the initial projections, which might have been based on population morality vs. the much healthier (anti-selected) nominee group. There is also some speculation that Edmond Halley, who coincidently presented the first known life table and annuity pricing model to the Royal Society in March 1693, was involved in the creation of the mortality tables underlying the investment projections. In fact, Edmond Halley and his life table are repeatedly referred to by participants in the tontine who subsequently complained about the low mortality rates and even made accusations of fraud. But, other than the coincidental timing, there is no evidence Halley participated in the tontine scheme or it's promotion\footnote{Source: Walford (1871)}.

\begin{center}
{\bf Table \ref{table03} here}
\end{center}

Table \ref{table03} displays the distribution of the ages of the nominees when the list closed in the summer of 1693. Recall that these are the 1013 nominees who did not switch over to the annuity paying 14\%. They remained in the tontine pool earning 10\% for seven years and 7\% thereafter. Notice the age distribution in 1693 and the fact that so many nominees were in their 20s, 30s, 40s and even one in their 50s. Table \ref{table03} also displays the survival status of this group in 1730 and 1749, which are the only two dates at which detailed documentation and a list of the nominees is available\footnote{Source: Leeson (1968)}.

In the next section we present an economic theory to describe and understand who might elect to participate in a tontine and who might choose a life annuity, as well as the properties of a tontine that are likely to generate the highest lifetime utility. To pre-empt the result, we will show that although the life annuity clearly dominated the tontine in terms of lifetime utility, a declining tontine payout structure is in fact optimal. Furthermore, it is possible that an investor who believes their nominee is much healthier than the remainder of the tontine pool may favour the tontine over the annuity.

\section{Tontine vs. Annuity: Economic Theory}
\label{theory}

We assume an objective survival function ${}_tp_x$, for an individual aged $x$ to survive $t$ years. The implications of subjective vs. objective survival rates will be addressed later. We assume that the tontine pays out continuously, as opposed to quarterly or monthly, although this doesn't really change the economics of the matter. The basic annuity involves annuitants (who are also the nominees) each paying \$1 to the insurer initially, and receiving in return an income stream of $c(t)\,dt$ for life. The constraint on annuities is that they are fairly priced, in other words that with a sufficiently large client base, the initial payments invested at the risk-free rate will fund the called-for payments in perpetuity. Later we discuss the implications of insurance loadings. Either way, there is a constraint on the annuity payout function $c(t)$, namely that 
\begin{equation}
\label{annuityconstraint}
\int_0^\infty e^{-rt}{}_tp_x\, c(t)\,dt=1.
\end{equation}
Though $c(t)$ is the payout rate per survivor, the payout rate per initial dollar invested is ${}_tp_x\,c(t)$. We will return to this later. 

Letting $u(c)$ denote the instantaneous utility of consumption (a.k.a. felicity function), a rational annuitant (with lifetime $\zeta$) having no bequest motives will choose a life annuity payout function for which $c(t)$ maximizes the discounted lifetime utility:
\begin{equation}
E[\int_0^\zeta e^{-rt}u(c(t))\,dt]=\int_0^\infty e^{-rt}{}_tp_x\, u(c(t))\,dt
\label{annuityutilityspecification}
\end{equation}
where $r$ is (also) the subjective discount rate (SDR), all subject to the constraint \eqref{annuityconstraint}. 

By the Euler-Lagrange theorem\footnote{Source: Elsgolc (2007) page \#51 or Gelfand and Fomin (2000), page \#15}, this implies the existence of a constant $\lambda$ such that 
\begin{equation}
e^{-rt}{}_tp_x\,u'(c(t))=\lambda e^{-rt}{}_tp_x \quad\text{for every $t$.}
\label{lagrangecondition}
\end{equation}
In other words, $u'(c(t))=\lambda$ is constant, so provided that utility function $u(c)$ is strictly concave, the optimal annuity payout function $c(t)$ is also constant. That constant is now determined by \eqref{annuityconstraint}, showing the following:
\begin{theorem} Optimized life annuities have constant $c(t)\equiv c_0$, where 
$$
c_0=\Big[\int_0^\infty e^{-rt}{}_tp_x\,dt\Big]^{-1}.
$$ 
\end{theorem}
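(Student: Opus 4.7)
The plan is to treat this as a standard constrained variational problem: maximize the utility functional in \eqref{annuityutilityspecification} over admissible payout functions $c(\cdot)$, subject to the fair-pricing budget constraint \eqref{annuityconstraint}. First I would introduce a Lagrange multiplier $\lambda$ and form the augmented functional
$$J[c] = \int_0^\infty e^{-rt}{}_tp_x\bigl[u(c(t)) - \lambda c(t)\bigr]\,dt + \lambda.$$
Because the integrand depends pointwise on $c(t)$ (no $c'(t)$ appears), the Euler-Lagrange equation collapses to the pointwise stationarity condition obtained by differentiating the integrand in $c(t)$, which is exactly \eqref{lagrangecondition}.

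Next, I would exploit the strict concavity of $u$. Since $u'$ is then strictly decreasing, the equation $u'(c(t)) = \lambda$ can be satisfied by at most one value of its argument; hence any maximizer must take the form $c(t) \equiv c_0 := (u')^{-1}(\lambda)$, independent of $t$. Substituting this constant back into \eqref{annuityconstraint} and solving for $c_0$ produces the displayed formula, since the $c_0$ factors out of the integral.

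Finally, one should verify that this critical point is a genuine maximum rather than a saddle. Strict concavity of $u$ makes the functional $c \mapsto \int_0^\infty e^{-rt}{}_tp_x\,u(c(t))\,dt$ strictly concave on the affine feasible set cut out by \eqref{annuityconstraint}, so the stationary point identified above is in fact the unique global maximizer, and no further second-order analysis is needed.

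The main obstacle I expect is not algebraic but one of admissibility and regularity: one must fix a class of payout functions within which \eqref{annuityutilityspecification} is well-defined, permit enough smooth compactly supported perturbations to justify passing from a variational condition to the pointwise equation \eqref{lagrangecondition}, and confirm that $\int_0^\infty e^{-rt}{}_tp_x\,dt$ is finite so that $c_0$ is well-defined. These are routine given $r>0$ and any reasonable survival function (which decays to zero beyond some terminal age), so once that framework is in place the argument reduces to the one-line consequence of strict concavity sketched above.
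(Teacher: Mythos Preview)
Your proposal is correct and follows essentially the same approach as the paper: apply the Euler--Lagrange condition to obtain $u'(c(t))=\lambda$, invoke strict concavity of $u$ to force $c(t)$ constant, and then read off $c_0$ from the budget constraint \eqref{annuityconstraint}. The paper's own argument is exactly this, stated in two sentences; your additional remarks on concavity guaranteeing a global maximum and on admissibility/finiteness are welcome rigor that the paper leaves implicit.
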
 
This result can be traced-back to Yaari (1965) who showed that the optimal (retirement) consumption profile is constant (flat) and that 100\% of wealth is annualized when there is no bequest motive. For more details and an alternate proof, see the excellent book by Cannon and Tonks (2008) and specifically the discussion on annuity demand theory in chapter 7.

\subsection{Optimal Tontine Payout}
In practice of course, insurance companies who are paying the life annuity $c(t)$ are exposed to both longevity risk, which is the uncertainty in ${}_tp_x$, as well as re-investment or interest rate risk, which is the uncertainty in $r$ over long horizons. The former is our concern here, so we will continue to assume that $r$ is a given constant for most of what follows. Note that even if re-investment rates were known with certainty, the insurance company would likely payout less than the $c(t)$ implied by equation \eqref{annuityconstraint} as a result of required capital and reserves, effectively lowering the lifetime utility of the (annuity and the) retiree.

This brings us to the tontine structures we will consider as an alternative, in which a predetermined dollar amount is shared among survivors at every $t$. Let $d(t)$ be the rate funds are paid out per initial dollar invested, a.k.a. the tontine payout function. Our main point (in this paper) is that there is no reason for tontine payout function to be a constant fixed percentage of the initial dollar invested (e.g. 4\% or 7\%), as it was historically. In fact, we can pose the same question as considered above for annuities: what $d(t)$ is optimal for subscribers, subject to the constraint that sponsor of the tontine cannot sustain a loss? Note that the natural comparison is now between $d(t)$ and ${}_tp_x\,c(t)$, where $c(t)$ is the optimal annuity payout found above. 

Suppose there are initially $n$ subscribers to the tontine scheme, each depositing a dollar with the tontine sponsor. Let $N(t)$ be the random number of live subscribers at time $t$. Consider one of these subscribers. Given that this individual is alive, $N(t)-1\sim \text{Bin}(n-1,{}_tp_x)$. In other words, the number of other (alive) subscribers at anytime $t$ is Binomially distributed with probability parameter ${}_tp_x$. 

So, as we found for the life annuity, this individual's discounted lifetime utility is
\begin{align*}
&E[\int_0^\zeta e^{-rt }u\Big(\frac{n d(t)}{N(t)}\Big)\,dt]=\int_0^\infty e^{-rt}{}_tp_x \,E[u\Big(\frac{n d(t)}{N(t)}\Big)\mid \zeta>t]\,dt\\
&\qquad=\int_0^\infty e^{-rt}{}_tp_x\sum_{k=0}^{n-1} \binom{n-1}{k}{}_tp_x^k(1-{}_tp_x)^{n-1-k}u\Big(\frac{nd(t)}{k+1}\Big)\,dt.
\end{align*}
The constraint on the tontine payout function $d(t)$ is that the initial deposit of $n$ should be sufficient to sustain withdrawals in perpetuity. Of course, at some point all subscribers will have died, so in fact the tontine sponsor will eventually be able to cease making payments, leaving a small windfall left over. But this time is not predetermined, so we treat that profit as an unavoidable feature of the tontine. Remember that we do not want to expose the sponsor to any longevity risk. It is the pool that bears this risk entirely.

Our budget or pricing constraint is therefore that 
\begin{equation}
\label{tontineconstraint}
\int_0^\infty e^{-rt} d(t)\,dt=1.
\end{equation}

So, for example, if $d(t)=d_0$ is forced to be constant (the historical structure, which we call a \emph{flat tontine}), then the tontine payout function (rate) is simply $d_0=r$, or slightly more if the upper bound of integration in \eqref{tontineconstraint} is less than infinity. We are instead searching for the optimal $d(t)$ which is far from constant.

By the Euler-Lagrange theorem from the Calculus of Variations, there is a constant $\lambda$ such that the optimal $d(t)$ satisfies 
\begin{equation}
e^{-rt}{}_tp_x\sum_{k=0}^{n-1} \binom{n-1}{k}{}_tp_x^k(1-{}_tp_x)^{n-1-k}\frac{n}{k+1}u'\Big(\frac{nd(t)}{k+1}\Big)=\lambda e^{-rt}
\label{opton1}
\end{equation}
for every $t$. Note that this expression directly links individual utility $u(c)$ to the optimal participating annuity. Recall that a tontine is an extreme case of participation or pooling of all longevity risk. Equation \eqref{opton1} dictates exactly how a risk averse retiree will tradeoff consumption against longevity risk. In other words, we are not advocating some ad-hoc actuarial process for smoothing realized mortality experience. 

Note that an actual mortality hazard rate $\mu_x$ does not appear in the above equation -- it appears only implicitly, in both ${}_tp_x$ and $\lambda$ (which is determined by \eqref{tontineconstraint}). Therefore, we will simplify our notation by re-parametrizing in terms of the probability: Let $D_u(p)$ satisfy 
\begin{equation}
p\sum_{k=0}^{n-1} \binom{n-1}{k}p^k(1-p)^{n-1-k}\frac{n}{k+1}u'\Big(\frac{nD_u(p)}{k+1}\Big)=\lambda.
\end{equation}
Substituting $p=1$ into the above equation, it collapses to $u'(D_u(1))=\lambda$. 

\begin{theorem}
Optimal tontine structure is $d(t)=D_u({}_tp_x)$, where $\lambda$ is chosen so \eqref{tontineconstraint} holds. 
\end{theorem}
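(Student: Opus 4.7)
The plan is to apply the Euler--Lagrange theorem from the calculus of variations, which is essentially set up already in the excerpt, and then recognize the resulting first-order condition as the defining equation for $D_u$. I would begin by writing the Lagrangian formed from the expected discounted lifetime utility (the double sum/integral shown just before \eqref{opton1}) and the budget constraint \eqref{tontineconstraint}, with Lagrange multiplier $\lambda$. Taking the pointwise functional derivative in $d(t)$ produces \eqref{opton1} for each $t$, as the excerpt derives.

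Next, I would cancel the common factor $e^{-rt}$ from both sides of \eqref{opton1}, then substitute $p = {}_tp_x$ and read the resulting identity as exactly the defining equation for $D_u(p)$ given immediately before the theorem. Thus any critical point must satisfy $d(t) = D_u({}_tp_x)$. Sufficiency -- that this critical point is actually the unique global maximum -- follows from strict concavity of $u$: for each fixed $t$, the integrand is a positive-weighted sum of concave functions of $d(t)$, hence concave, and the admissible set cut out by the linear constraint \eqref{tontineconstraint} is convex, so the first-order condition characterizes the maximizer.

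The main obstacle is well-posedness of the formula, i.e.\ ensuring that $D_u(p)$ and $\lambda$ actually exist and are unique. Strict concavity of $u$ makes $u'$ strictly decreasing, so the left-hand side of the defining equation for $D_u(p)$ is strictly decreasing in $D_u(p)$ at each fixed $p$ and $\lambda$, giving uniqueness; existence of $D_u(p) \in (0,\infty)$ needs $u'$ to map onto $(0,\infty)$, which holds for CRRA felicity (the case used later in the paper). To pin down $\lambda$, I would observe that $D_u(p)$ is continuous and strictly decreasing in $\lambda$, so by dominated convergence the functional $\lambda \mapsto \int_0^\infty e^{-rt} D_u({}_tp_x)\,dt$ is continuous and strictly decreasing, running from $+\infty$ down to $0$ as $\lambda$ ranges over $(0,\infty)$. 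The intermediate value theorem then yields a unique $\lambda$ for which this integral equals $1$, completing the construction.
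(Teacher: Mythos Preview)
Your proposal is correct and follows essentially the same route as the paper: the paper states the theorem without a separate proof, treating it as an immediate consequence of the Euler--Lagrange equation \eqref{opton1} together with the definition of $D_u(p)$ obtained by substituting $p={}_tp_x$. Your additional arguments for sufficiency (via concavity of the integrand and linearity of the constraint) and for existence/uniqueness of $D_u(p)$ and $\lambda$ are sound and supply rigor that the paper leaves implicit.
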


We can simplify this in the case of Constant Relative Risk Aversion (CRRA) utility. Let  $u(c)=c^{1-\gamma}/(1-\gamma)$ if $\gamma\neq 1$, and when $\gamma= 1$ take $u(c)=\log c$ instead. Define 
\begin{equation}
\theta_{n,\gamma}(p)=E\Big[\Big(\frac{n}{N(p)}\Big)^{1-\gamma}\Big]=\sum_{k=0}^{n-1} \binom{n-1}{k}p^{k}(1-p)^{n-1-k}\Big(\frac{n}{k+1}\Big)^{1-\gamma}
\end{equation}
where $N(p)-1\sim\text{Bin}(n-1,p)$. Set $\beta_{n,\gamma}(p)=p\theta_{n,\gamma}(p)$. Then 
\begin{corollary} With CRRA utility, the optimal tontine has withdrawal rate
$D_{n,\gamma}^{\text{\rm OT}}(p)=D_{n,\gamma}^{\text{\rm OT}}(1)\beta_{n,\gamma}(p)^{1/\gamma}$, where
\begin{equation}
D_{n,\gamma}^{\text{\rm OT}}(1)=\Big[\int_0^\infty e^{-rt}\beta_{n,\gamma}({}_tp_x)^{1/\gamma}\,dt\Big]^{-1}.
\label{D(1)formula}
\end{equation}
\end{corollary}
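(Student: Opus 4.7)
The plan is to specialize the Euler–Lagrange characterization from the preceding theorem to the CRRA case and then do a clean algebraic separation of variables. Concretely, with $u'(c)=c^{-\gamma}$, I would substitute into
$$p\sum_{k=0}^{n-1} \binom{n-1}{k}p^k(1-p)^{n-1-k}\frac{n}{k+1}u'\Big(\frac{nD_u(p)}{k+1}\Big)=\lambda$$
and use the identity
$$\frac{n}{k+1}\Big(\frac{nD_u(p)}{k+1}\Big)^{-\gamma}=D_u(p)^{-\gamma}\Big(\frac{n}{k+1}\Big)^{1-\gamma}.$$
This lets me factor $D_u(p)^{-\gamma}$ out of the sum entirely, leaving the sum equal to $\theta_{n,\gamma}(p)$ by definition. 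The Euler–Lagrange condition then collapses to the single scalar equation
$$\beta_{n,\gamma}(p)\,D_u(p)^{-\gamma}=\lambda,$$
so $D_u(p)=\lambda^{-1/\gamma}\beta_{n,\gamma}(p)^{1/\gamma}$.

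Next I would pin down the multiplicative constant by evaluating at $p=1$. Since $N(1)-1\sim\text{Bin}(n-1,1)$ is almost surely $n-1$, the only surviving term in $\theta_{n,\gamma}(p)$ at $p=1$ corresponds to $k=n-1$, giving $\theta_{n,\gamma}(1)=1$ and hence $\beta_{n,\gamma}(1)=1$. Therefore $D_{n,\gamma}^{\text{OT}}(1)=\lambda^{-1/\gamma}$, and the proportionality claim $D_{n,\gamma}^{\text{OT}}(p)=D_{n,\gamma}^{\text{OT}}(1)\,\beta_{n,\gamma}(p)^{1/\gamma}$ follows.

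Finally, to obtain the formula \eqref{D(1)formula}, I would substitute $d(t)=D_{n,\gamma}^{\text{OT}}({}_tp_x)=D_{n,\gamma}^{\text{OT}}(1)\,\beta_{n,\gamma}({}_tp_x)^{1/\gamma}$ into the budget constraint \eqref{tontineconstraint}, pull the constant $D_{n,\gamma}^{\text{OT}}(1)$ out of the integral, and solve. This yields exactly the stated expression for $D_{n,\gamma}^{\text{OT}}(1)$, completing the proof.

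Honestly there is no deep obstacle here: the result is essentially a substitution calculation, and the only place where one must be a little careful is the evaluation at $p=1$, where the binomial sum degenerates to a single term. Everything else is routine separation of variables in the Euler–Lagrange equation, followed by fixing the normalization via the pricing constraint.
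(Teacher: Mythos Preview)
Your proposal is correct and is essentially the same argument as the paper's: substitute $u'(c)=c^{-\gamma}$ into the Euler--Lagrange condition, factor out $D(p)^{-\gamma}$ to obtain $\beta_{n,\gamma}(p)D(p)^{-\gamma}=\lambda$, evaluate at $p=1$ to identify $\lambda=D(1)^{-\gamma}$, and then use the budget constraint \eqref{tontineconstraint} to pin down $D(1)$. You have simply filled in the intermediate algebraic steps that the paper leaves implicit, and your uniform treatment via $u'(c)=c^{-\gamma}$ also covers the $\gamma=1$ case that the paper handles by the remark ``a similar argument applies.''
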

\begin{proof}
Suppose $\gamma\neq 1$. Then the equation for $D_{n,\gamma}^{\text{OT}}(p)$  becomes that
\begin{equation}
D_{n,\gamma}^{\text{OT}}(p)^{-\gamma}p\theta_{n,\gamma}(p)=\lambda=D_{n,\gamma}^{OT}(1)^{-\gamma}.
\label{Dformula1}
\end{equation}
The constraint \eqref{tontineconstraint} now implies \eqref{D(1)formula}. A similar argument applies when $\gamma=1$. 
\end{proof}
$D_{n,\gamma}^{\text{OT}}(p)/D_{n,\gamma}^{\text{OT}}(1)=\beta_{n,\gamma}(p)^{1/\gamma}$ does not depend on the particular form of the mortality hazard rate $\mu_x$, or on the interest rate $r$, but only on longevity risk aversion $\gamma$ and the number of initial subscribers to the tontine pool, $n$. In other words, the mortality hazard rate and $r$ enter into the expression for $D_{n,\gamma}^{\text{OT}}(p)$ only via the constant $D_{n,\gamma}^{\text{OT}}(1)$.

To our knowledge, we are the first to derive the properties of an optimal tontine payout function, in contrast to the optimal life annuity which is well-known in the literature. Let us now compare the two. In other words, we compare the payout per initial subscriber for both products. Equivalently, we can compare the actual payout of the optimal tontine to what we call a {\it natural tontine}, in which the payout is $d(t)=D_{\text{N}}({}_tp_x)$ where $D_{\text{N}}(p)$ is proportional to $p$, just as is the case for the annuity payment per initial dollar invested. Comparing the budget constraints \eqref{annuityconstraint} and \eqref{tontineconstraint}, we see that  $D_{\text{N}}(p)=pc_0$. As we will soon see, the natural tontine is optimal for logarithmic utility ($\gamma=1$) and is close to optimal when $n$ is large. So we propose this as a reasonable structure for designing tontine products in practice. 

Figure \ref{fig6} shows the ratio of $D_{n,\gamma}^{\text{OT}}(p)$ to $D_{\text{N}}(p)$, with Gompertz hazard rate $\lambda(t)=\frac{1}{b}e^{\frac{x+t-m}{b}}$. This can be carried out for any specific hazard rate, but to get a universal result, we again re-parametrize in terms of $p$, and scale out constants. In other words, we consider the ratio 
\begin{equation}
R_{n,\gamma}(p)=\frac{D_{n,\gamma}^{\text{OT}}(p)/D_{n,\gamma}^{\text{OT}}(1)}{D_{\text{N}}(p)/D_{\text{N}}(1)}
=\big[p^{1-\gamma}\theta_{n,\gamma}(p)\big]^{1/\gamma}=\frac{\beta_{n,\gamma}(p)^{1/\gamma}}{p}.
\end{equation}
This has $R_{n,\gamma}(1)=1$ and depends only on $p$, $\gamma$, and $n$, not the specific mortality model. 
By the law of large numbers, it is clear that $\lim_{n\to\infty}R_{n,\gamma}(p)= 1$ for any fixed $\gamma$ and $p$. In general, we have the following:

\begin{theorem} For any $n$ and $0<p<1$, 
\label{Rtheorem}
$$
R_{n,\gamma}(p)\text{ is}
\begin{cases}
<1, & 0<\gamma<1\\
=1, & \gamma=1\\
>1, & 1<\gamma.
\end{cases}
$$
\end{theorem}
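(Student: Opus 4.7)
The plan is to reduce the entire theorem to a single application of Jensen's inequality to a moment of a Binomial random variable. First I would absorb the factor $p^{1-\gamma}$ from the identity $R_{n,\gamma}(p)^\gamma = p^{1-\gamma}\theta_{n,\gamma}(p)$ into the summand, obtaining
\[
R_{n,\gamma}(p)^\gamma = \sum_{k=0}^{n-1}\binom{n-1}{k}p^k(1-p)^{n-1-k}\Big(\frac{np}{k+1}\Big)^{1-\gamma}.
\]
Re-index by $j=k+1$ and apply the size-biasing identity $\binom{n-1}{j-1}=(j/n)\binom{n}{j}$. After pulling $(np)^{1-\gamma}$ outside, the factor $j\,p^{j-1}/(np)$ collapses to $p^j/p$, and the sum becomes
\[
R_{n,\gamma}(p)^\gamma = \frac{1}{(np)^\gamma}\sum_{j=0}^{n}\binom{n}{j}p^j(1-p)^{n-j}j^\gamma = \frac{E[S^\gamma]}{(np)^\gamma},
\]
where $S\sim\text{Bin}(n,p)$ (the $j=0$ term contributes nothing since $\gamma>0$). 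Since $E[S]=np$, this is exactly $E[S^\gamma]/E[S]^\gamma$.

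With this identity in hand the trichotomy is immediate. The map $x\mapsto x^\gamma$ on $[0,\infty)$ is strictly convex for $\gamma>1$, linear for $\gamma=1$, and strictly concave for $0<\gamma<1$. Because $S$ is genuinely non-degenerate whenever $0<p<1$, Jensen's inequality gives strict inequality in the convex and concave cases and equality in the linear case. Taking $\gamma$-th roots preserves the direction of the inequality because $x\mapsto x^{1/\gamma}$ is strictly increasing on $[0,\infty)$ for $\gamma>0$, yielding the three conclusions claimed for $R_{n,\gamma}(p)$.

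The only step that requires care is the combinatorial reduction in the middle: it is simply the standard size-biasing trick that rewrites a $\text{Bin}(n-1,p)$ expectation of $f(k+1)$ as a weighted $\text{Bin}(n,p)$ expectation of $f(j)$. I do not anticipate any substantive obstacle, since once $R_{n,\gamma}(p)^\gamma$ is recognized as $E[S^\gamma]/E[S]^\gamma$, the theorem becomes an elementary consequence of Jensen's inequality applied to a power function.
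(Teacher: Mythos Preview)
Your argument is correct and considerably more direct than the paper's. The paper reduces Theorem~\ref{Rtheorem} to Lemma~\ref{betabound} (the comparison of $\beta_{n,\gamma}(p)$ with $p^\gamma$), which is then proved in the appendix by a case analysis: a straightforward Jensen bound on $E[(N(p)/n)^{\gamma-1}]$ handles $\gamma\ge 2$, but the ranges $0<\gamma<1$ and $1<\gamma<2$ are treated via the auxiliary estimate $E[n/N(p)]<1/p$, monotonicity of $L^a$-norms in $a$, and a l'H\^opital argument identifying $\lim_{a\downarrow 0}E[(n/N(p))^a]^{1/a}$. Your size-biasing identity $R_{n,\gamma}(p)^\gamma=E[S^\gamma]/E[S]^\gamma$ with $S\sim\text{Bin}(n,p)$ collapses all three regimes into a single Jensen step, so it is both shorter and uniform in $\gamma$. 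The only by-product the paper's longer route provides is the intermediate inequality $E[\log(N(p)/n)]>\log p$, which it later reuses for the $\gamma=1$ case of Theorem~\ref{utilityinequality}; under your approach that inequality would have to be established separately if needed.
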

Since $R_{n,\gamma}(p)=\Big(\beta_{n,\gamma}(p)/p^\gamma\Big)^{1/\gamma}$, the theorem follows immediately from the following estimate,  which is proved in the appendix \ref{proof1}.
\begin{lemma} For any $n$ and $0<p<1$, 
\label{betabound}
$$
\beta_{n,\gamma}(p)\text{ is} 
\begin{cases}
<p^{\gamma}, &0<\gamma<1\\
=p^{\gamma}, & \gamma=1\\
>p^{\gamma}, & 1<\gamma.
\end{cases}
$$
\end{lemma}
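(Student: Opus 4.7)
The plan is to recast the claim in terms of $Y = np/N(p)$ and then deploy Jensen's inequality on the map $y \mapsto y^{1-\gamma}$, feeding in a single explicit computation of $E[Y]$. Since $\beta_{n,\gamma}(p) = p\,E[(n/N(p))^{1-\gamma}]$, dividing the desired three-way inequality by $p^\gamma$ recasts it as a three-way comparison between $E[(np/N(p))^{1-\gamma}] = E[Y^{1-\gamma}]$ and $1$: one wants $E[Y^{1-\gamma}] < 1$ for $0<\gamma<1$, equality for $\gamma=1$, and $E[Y^{1-\gamma}] > 1$ for $\gamma > 1$.

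\medskip

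The computational input is the explicit value of $E[Y]$. Using the identity $\frac{1}{k+1}\binom{n-1}{k} = \frac{1}{n}\binom{n}{k+1}$ to re-index the binomial sum, one finds
$$E[1/N(p)] = \sum_{k=0}^{n-1}\binom{n-1}{k}\frac{p^k(1-p)^{n-1-k}}{k+1} = \frac{1-(1-p)^n}{np},$$
and hence $E[Y] = 1 - (1-p)^n$, which is strictly less than $1$ whenever $0 < p < 1$.

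\medskip

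For the main step, note that $f(y) = y^{1-\gamma}$ has $f''(y) = \gamma(\gamma-1)y^{-1-\gamma}$, so $f$ is strictly concave when $0<\gamma<1$ and strictly convex when $\gamma > 1$. In the concave regime, Jensen yields $E[Y^{1-\gamma}] \leq (E[Y])^{1-\gamma}$, and combined with $E[Y] < 1$ and the positive exponent $1-\gamma > 0$, this gives $E[Y^{1-\gamma}] < 1$. In the convex regime, Jensen yields $E[Y^{1-\gamma}] \geq (E[Y])^{1-\gamma}$, and now the negative exponent $1-\gamma < 0$ flips $E[Y] < 1$ into $(E[Y])^{1-\gamma} > 1$, so $E[Y^{1-\gamma}] > 1$. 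The $\gamma = 1$ case is immediate since $Y^0 \equiv 1$.

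\medskip

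The main point to watch is that the direction of Jensen's inequality and the direction in which raising $E[Y] < 1$ to the power $1-\gamma$ flips the comparison are both controlled by the sign of $1-\gamma$; these two flips cooperate so that the final inequality always points the required way. The only subtlety is the degenerate case $n=1$, where $Y=p$ is deterministic and Jensen is vacuous, but then $E[Y^{1-\gamma}] = p^{1-\gamma}$ and the inequality follows directly from $0 < p < 1$. Aside from this bookkeeping, no deeper obstacle arises.
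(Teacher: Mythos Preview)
Your proof is correct, and in fact cleaner than the paper's. Both arguments rest on the same binomial identity giving $E[n/N(p)]=(1-(1-p)^n)/p<1/p$, but they exploit it differently. The paper works directly with $\theta_{n,\gamma}(p)=E[(n/N(p))^{1-\gamma}]$ and compares it to $p^{\gamma-1}$: for $0<\gamma<1$ it invokes the monotonicity of power means $a\mapsto E[X^a]^{1/a}$ (H\"older) to bound $E[(n/N(p))^{1-\gamma}]^{1/(1-\gamma)}<E[n/N(p)]<1/p$; for $\gamma>1$ it takes a detour through the limit $a\downarrow 0$ of the power mean, using l'H\^opital to identify this limit as $\exp(-E[\log(N(p)/n)])$, derives $E[\log(N(p)/n)]>\log p$, and then applies power-mean monotonicity once more. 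Your device of rescaling to $Y=np/N(p)$ so that the target becomes simply $E[Y^{1-\gamma}]\lessgtr 1$ collapses both regimes into a single Jensen step on the convex/concave map $y\mapsto y^{1-\gamma}$, with the strictness supplied by $E[Y]=1-(1-p)^n<1$ rather than by strict Jensen. This is genuinely more economical: it avoids the l'H\^opital passage entirely and treats $\gamma>1$ and $\gamma<1$ symmetrically. (Your separate $n=1$ bookkeeping is harmless but not actually needed, since even with equality in Jensen the chain $E[Y^{1-\gamma}]\le (E[Y])^{1-\gamma}<1$, respectively $\ge (E[Y])^{1-\gamma}>1$, still yields the strict conclusion.)
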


Since low $p$ corresponds to advanced age, and $p=1$ corresponds to the date of purchase, Theorem \ref{Rtheorem} implies that individuals who are more longevity risk-averse than logarithmic (ie. $\gamma>1$) prefer to enhance the natural tontine payout at advanced ages, at the expense of the initial payout. Whereas individuals less longevity risk-averse than logarithmic (ie $0<\gamma<1$) prefer to enhance initial payouts at the expense of those at advanced ages. 

Observe that the case $\gamma=1$ of the Theorem implies the assertion made earlier, that the natural tontine is optimal when $\gamma=1$, The argument is simple in this case. For $u(c)=\log(c)$, the Euler-Lagrange equations are simply that $d(t)=\lambda\cdot{}_tp_x$, which implies that $D_{n,1}^{\text{OT}}(p)$ is proportional to $p$. Therefore $D_{n,1}^{\text{OT}}(p)=D_{\text{N}}(p)$ and so $R_{n,1}(p)=1$ for every $p$. 

In section \eqref{numerics} we will provide a variety of numerical examples that illustrate the optimal tontine payout function $d(t)$ as a function of longevity risk aversion 
$\gamma$ and the initial size of the tontine pool $n$.

\subsection{Optimal Tontine Utility vs. Annuity Utility}
Let $U_{n,\gamma}^{\text{OT}}$ denote the utility of the optimal tontine. To compute this, suppose $\gamma\neq 1$, and observe that
\begin{equation}
\frac{D_{n,\gamma}^{\text{OT}}(p)^{1-\gamma}}{1-\gamma}p\theta_{n,\gamma}(p)=\frac{D_{n,\gamma}^{\text{OT}}(p)}{1-\gamma}D_{n,\gamma}^{\text{OT}}(p)^{-\gamma}p\theta_{n,\gamma}(p)
=\frac{D_{n,\gamma}^{\text{OT}}(p)}{1-\gamma}D_{n,\gamma}^{\text{OT}}(1)^{-\gamma}
\end{equation}
by \eqref{Dformula1}. The utility of the optimal tontine is therefore precisely
\begin{align*}
U_{n,\gamma}^{\text{OT}}
&=\int_0^\infty e^{-rt}\frac{D_{n,\gamma}^{\text{OT}}({}_tp_x)^{1-\gamma}}{1-\gamma}{}_tp_x\,\theta_{n,\gamma}({}_tp_x)\,dt 
= \frac{D_{n,\gamma}^{\text{OT}}(1)^{-\gamma}}{1-\gamma}\int_0^\infty e^{-rt}D_{n,\gamma}^{\text{OT}}({}_tp_x)\,dt\\
&=\frac{D_{n,\gamma}^{\text{OT}}(1)^{-\gamma}}{1-\gamma}
=\frac{1}{1-\gamma}\Big(\int_0^\infty e^{-rt}\beta_{n,\gamma}({}_tp_x)^{1/\gamma}\,dt\Big)^\gamma
\end{align*}
by \eqref{tontineconstraint} and \eqref{D(1)formula}. 

Consider instead the utility $U_\gamma^{\text{A}}$ provided by the annuity, namely 
\begin{equation}
U_\gamma^{\text{A}}=\int_0^\infty e^{-rt}{}_tp_x \frac{c_0^{1-\gamma}}{1-\gamma}\,dt
=\frac{\int_0^\infty e^{-rt}{}_tp_x\,dt}{(1-\gamma)\Big(\int_0^\infty e^{-rt}{}_tp_x\,dt\Big)^{1-\gamma}}
=\frac{1}{1-\gamma}\Big(\int_0^\infty e^{-rt}{}_tp_x\,dt\Big)^\gamma.
\end{equation}

\begin{theorem}
$U_{n,\gamma}^{\text{\rm OT}}<U_\gamma^{\text{\rm A}}$ for any $n$ and $\gamma>0$.
\label{utilityinequality}
\end{theorem}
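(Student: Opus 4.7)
My strategy is to split by the sign of $\gamma-1$. For $\gamma\ne 1$, the two utilities have the closed forms $U_{n,\gamma}^{\text{OT}}=\frac{1}{1-\gamma}A^{\gamma}$ and $U_{\gamma}^{\text{A}}=\frac{1}{1-\gamma}B^{\gamma}$ derived immediately above the theorem, where $A=\int_0^\infty e^{-rt}\beta_{n,\gamma}({}_tp_x)^{1/\gamma}\,dt$ and $B=\int_0^\infty e^{-rt}{}_tp_x\,dt$. Lemma \ref{betabound} provides the pointwise comparison $\beta_{n,\gamma}(p)^{1/\gamma}<p$ for $0<\gamma<1$ and $\beta_{n,\gamma}(p)^{1/\gamma}>p$ for $\gamma>1$, strictly on $\{0<p<1\}$. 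Since any realistic mortality model has ${}_tp_x\in(0,1)$ on a set of positive Lebesgue measure, integrating against $e^{-rt}$ transfers these inequalities to $A<B$ or $A>B$ respectively. The map $x\mapsto x^{\gamma}$ is strictly increasing on $(0,\infty)$, so $A^{\gamma}$ and $B^{\gamma}$ obey the same strict inequality, and dividing by $1-\gamma$ preserves the order when $0<\gamma<1$ but reverses it when $\gamma>1$; in both subcases the conclusion is $U_{n,\gamma}^{\text{OT}}<U_{\gamma}^{\text{A}}$.

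The main obstacle is the boundary case $\gamma=1$, because Lemma \ref{betabound} collapses to the equality $\beta_{n,1}(p)=p$ and the argument above degenerates. Here I would use a direct Jensen-inequality computation. Since the natural tontine is optimal for $\gamma=1$, we have $D_{n,1}^{\text{OT}}(p)=p/a$ with $a=\int_0^\infty e^{-rt}{}_tp_x\,dt$, and a short calculation reduces the theorem to the pointwise inequality $\log(np)<E[\log N(p)]$ for $p\in(0,1)$, where $N(p)-1\sim\text{Bin}(n-1,p)$. Using the explicit identity $E[1/N(p)]=(1-(1-p)^n)/(np)$ together with Jensen's inequality applied to the concave logarithm, one obtains
$$E[\log N(p)]\ge -\log E[1/N(p)]=\log\bigl(np/(1-(1-p)^n)\bigr)\ge \log(np),$$
with the first inequality strict whenever $N(p)$ is non-degenerate, i.e. $n\ge 2$ and $p<1$. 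The case $n=1$ is trivial: $N\equiv 1$ and so the required inequality is just $\log p<0$ for $p<1$.

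One technical remark: the theorem's strict inequality at $\gamma=1$ cannot be obtained by a pure continuity/limit argument from the $\gamma\ne 1$ cases, since that would only yield $U_{n,1}^{\text{OT}}\le U_1^{\text{A}}$, which is why the separate Jensen step above is essential. Alternatively, one could give a unified proof for all $\gamma>0$ by setting $\tilde c(t):=E[nd(t)/N(t)\mid \zeta>t]$ for the optimal tontine payout $d(t)$, applying Jensen pointwise in $t$ to the strictly concave $u$, and checking that $\tilde c$ satisfies the annuity budget \eqref{annuityconstraint} with slack (because $(1-{}_tp_x)^n>0$), so that its utility is bounded above by $U_\gamma^{\text{A}}$ via optimality of the constant annuity; but the case-by-case argument is more economical given that Lemma \ref{betabound} is already in hand.
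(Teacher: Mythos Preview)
Your argument is correct and, for $\gamma\neq 1$, identical in spirit to the paper's: both invoke Lemma \ref{betabound} together with the closed forms $U_{n,\gamma}^{\text{OT}}=\frac{1}{1-\gamma}A^\gamma$ and $U_\gamma^{\text{A}}=\frac{1}{1-\gamma}B^\gamma$ and track the sign of $1-\gamma$.

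For $\gamma=1$ both proofs reduce the claim to the pointwise inequality $E[\log N(p)]>\log(np)$, but obtain it by different means. The paper extracts this inequality (equation \eqref{loginequality}) as a by-product of the proof of Lemma \ref{betabound}, where it arises from the $L^a$-norm monotonicity $E[(n/N)^a]^{1/a}<E[n/N]<1/p$ together with a l'H\^opital computation of the limit $a\downarrow 0$. Your route is more direct: a single application of Jensen to the convex function $-\log$ gives $E[\log N]\ge -\log E[1/N]$, and then the explicit identity $E[1/N(p)]=(1-(1-p)^n)/(np)$ (Lemma \ref{reciprocalbound}) finishes the job. Your version is self-contained and arguably cleaner; the paper's version has the virtue of reusing machinery already in place. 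Your sketched unified alternative via $\tilde c(t)=E[nd(t)/N(t)\mid\zeta>t]$ and budget slack is also valid and does not appear in the paper; it is a nice conceptual argument that avoids case-splitting entirely.
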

\begin{proof}
For $\gamma\neq 1$ this follows from Lemma \ref{betabound} and the calculations given above. We show the case $\gamma=1$ in the appendix. \end{proof}

\subsection{Indifference Annuity Loading}
The insurer offering an annuity will be modelled as setting aside some fraction of the initial deposits to fund the annuity's costs.  In other words, a fraction $\delta$ of the initial deposits are deducted initially, to fund risk management, capital reserves, etc. The balance, invested at the risk free rate $r$ will fund the annuity. Therefore, with loading, \eqref{annuityconstraint} becomes that $\int_0^\infty e^{-rt}{}_tp_x\, c(t)\,dt=1-\delta$, which implies that $c(t)\equiv c_1=(1-\delta)c_0$ is the optimal payout structure for the annuity. The utility of the loaded annuity is therefore 
$$
U^{\text{LA}}_{\gamma,\delta}=
\int_0^\infty e^{-rt}{}_tp_x \frac{c_1^{1-\gamma}}{1-\gamma}\,dt
=\frac{(1-\delta)^{1-\gamma}\int_0^\infty e^{-rt}{}_tp_x\,dt}{(1-\gamma)\Big(\int_0^\infty e^{-rt}{}_tp_x\,dt\Big)^{1-\gamma}}
=\frac{c_0^{-\gamma}}{1-\gamma}(1-\delta)^{1-\gamma}
$$
for $\gamma\neq 1$, and $\frac{\log(c_0)+\log(1-\delta)}{c_0}$ for $\gamma=1$. 

In section \ref{numerics} we will consider the \emph{indifference loading} $\delta$ that, when applied to the annuity, makes an individual indifferent between the annuity and a tontine, ie 
$U^{\text{LA}}_{\gamma,\delta}=U_{n,\gamma}^{\text{OT}}$. 
It turns out that the loading $\delta$ decreases with $n$, in such a way that the \emph{total} loading $n\delta$ stays roughly stable. In other words, there is at most a fixed amount (roughly) the insurer can deduct from the \emph{aggregate} annuity pool, regardless of the number of participants, before individuals start to derive greater utility from the tontine. We will illustrate this observation, at least for $1<\gamma\le 2$, by proving the following inequality in the appendix
\begin{theorem}
\label{loadinginequality}
Suppose that $1<\gamma\le 2$. Then 
$\delta<\frac{1}{n}\big(\frac{c_0}{r}-1\big)$.
\end{theorem}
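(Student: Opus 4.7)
My plan is to turn the indifference condition into an explicit formula for $1-\delta$ and then squeeze it between two complementary one-line estimates whose constants are engineered to cancel. First I would abbreviate $A=\int_0^\infty e^{-rt}{}_tp_x\,dt=1/c_0$, $C=\int_0^\infty e^{-rt}\,dt=1/r$, and $B=\int_0^\infty e^{-rt}\beta_{n,\gamma}({}_tp_x)^{1/\gamma}\,dt=1/D_{n,\gamma}^{\text{OT}}(1)$. Equating the expressions for $U^{\text{LA}}_{\gamma,\delta}$ and $U_{n,\gamma}^{\text{OT}}$ stated just above the theorem yields $(1-\delta)^{\gamma-1}=(A/B)^\gamma$, so $1-\delta=(A/B)^{\gamma/(\gamma-1)}$; Lemma \ref{betabound} gives $A<B$, hence $0<A/B<1$.

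The first estimate is a tangent-line bound on $\delta$. Since $1<\gamma\le 2$, the exponent $s:=\gamma/(\gamma-1)\ge 2$, so $x^s$ is strictly convex on $[0,\infty)$. Applying its tangent line at $x=1$ to $x=A/B$ strictly (because $A<B$) gives
$$
\delta=1-(A/B)^s<s(1-A/B)=\frac{\gamma}{\gamma-1}\cdot\frac{B-A}{B}.
$$
The second estimate controls $B-A$. Writing $\theta_{n,\gamma}(p)=E[(N(p)/n)^{\gamma-1}]$ with $E[N(p)/n]=(1+(n-1)p)/n$, I would apply Jensen's inequality for the concave power $x\mapsto x^{\gamma-1}$ (this is the step that forces $\gamma\le 2$) to obtain $\theta_{n,\gamma}(p)\le((1+(n-1)p)/n)^{\gamma-1}$, and then Young's inequality with conjugate exponents $\gamma$ and $\gamma/(\gamma-1)$ to linearize in $p$:
$$
\beta_{n,\gamma}(p)^{1/\gamma}\le p^{1/\gamma}\Bigl(\frac{1+(n-1)p}{n}\Bigr)^{(\gamma-1)/\gamma}\le p+\frac{\gamma-1}{n\gamma}(1-p).
$$
Integrating against $e^{-rt}$ with $p={}_tp_x$ then produces $B\le A+\frac{\gamma-1}{n\gamma}(C-A)$.

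Combining the two estimates and using $B\ge A$ one last time,
$$
\delta<\frac{\gamma}{\gamma-1}\cdot\frac{B-A}{B}\le\frac{\gamma}{\gamma-1}\cdot\frac{\gamma-1}{n\gamma}\cdot\frac{C-A}{B}=\frac{C-A}{nB}\le\frac{C-A}{nA}=\frac{1}{n}\Bigl(\frac{c_0}{r}-1\Bigr),
$$
where the last equality uses $C/A=(1/r)/(1/c_0)=c_0/r$. The main obstacle is engineering the second estimate so that the prefactor $\frac{\gamma-1}{n\gamma}$ emerges and cancels exactly against the $\frac{\gamma}{\gamma-1}$ from the tangent step. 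This dovetailing is what dictates both the upper-bound direction of Jensen (which forces $\gamma\le 2$) and the particular Young weights matched to the exponent $1/\gamma$ appearing in $\beta_{n,\gamma}^{1/\gamma}$; for $\gamma>2$ the Jensen step reverses and a genuinely different argument would be needed.
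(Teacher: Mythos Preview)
Your proof is correct and follows essentially the same route as the paper's. Both arguments hinge on the concavity of $x^{\gamma-1}$ for $1<\gamma\le 2$ to control $\theta_{n,\gamma}(p)$, then linearize $\beta_{n,\gamma}(p)^{1/\gamma}$ to obtain the identical key bound $\beta_{n,\gamma}(p)^{1/\gamma}\le p+\frac{\gamma-1}{\gamma n}(1-p)$, and finish with a tangent-line estimate on the map $x\mapsto 1-x^{\gamma/(\gamma-1)}$ at $x=1$. The only cosmetic differences are that where the paper uses two successive tangent-line bounds (for $c\mapsto c^{\gamma-1}$ at $c=p$ and for $c\mapsto c^{1/\gamma}$ at $c=p^\gamma$), you instead use Jensen at the mean followed by the weighted AM--GM (Young) inequality; and your final convexity step produces the slightly sharper intermediate factor $(B-A)/B$ rather than the paper's $(B-A)/A$, which you then relax using $B\ge A$.
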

Note that $c_0>r$, since $c_0^{-1}=\int_0^\infty e^{-rt}{}_tp_x\,dt<\int_0^\infty e^{-rt}\,dt=r^{-1}$.

\subsection{Subjective Mortality}

Suppose now that there are two survival functions, an objective one ${}_tp_x$ for the general population, and a subjective one ${}_t\tilde p_x$ for the individual in question. From an economic perspective, this is equivalent to having a subjective discount rate which differs from the risk free rate $r$. Naturally the individual would prefer an annuity with a different payout structure, because the Euler-Lagrange equations \eqref{lagrangecondition} become:
\begin{equation}
e^{-rt}{}_t\tilde p_x\,u'(c(t))=\lambda e^{-rt}{}_tp_x \quad\text{for every $t$.}
\end{equation}

In other words, the optimal annuity structure, under CRRA utility, is
\begin{equation}
c_S(t)=c_S(0)\Big(\frac{{}_t\tilde p_x}{{}_tp_x}\Big)^{1/\gamma}.
\end{equation}
So a healthier-than-normal individual (${}_t\tilde p_x>{}_tp_x$) will prefer back-loading the annuity stream. In essence, the mortality outlier finds annuity payments underpriced when they occur late in life, so would prefer to invest in an annuity that enhances those payments. Of course, this individual is going to be out of luck - annuities will be designed around the population's mortality, not theirs. But, this does raise a question about how one would test the extent to which individuals believe themselves to be healthier than the population: if different annuity structures are available, people who believe themselves to be healthier than the average annuity purchaser should signal that belief by choosing back-loaded annuity payouts. 

We can look at tontines in the same way (and test whether tontines might act as a proxy for back-loaded annuities). If $\tilde p$ and $p$ denote the subjective and objective survival probabilities, then Euler-Lagrange for CRRA utility is that the optimal payout rate $D_{n,\gamma}(p,\tilde p)$ satisfies
\begin{equation}
\tilde p D_{n,\gamma}(p,\tilde p)^{-\gamma}\theta_{n,\gamma}(p)=\lambda.
\end{equation}
In other words, if $\beta_{n,\gamma}(p,\tilde p)=\tilde p\theta_{n,\gamma}(p)$ then 
\begin{equation}
D_{n,\gamma}(p,\tilde p)=D_{n,\gamma}(1,1)\beta_{n,\gamma}(p,\tilde p)^{1/\gamma},
\end{equation}
where the specific mortality model again enters only via
\begin{equation}
D_{n,\gamma}(1,1)=\Big[\int_0^\infty e^{-rt}\beta_{n,\gamma}({}_tp_x, {}_t\tilde p_x)^{1/\gamma}\,dt\Big]^{-1}.
\end{equation}
What this also implies is that a tontine provides relatively more utility if the (participant) believes that the objective population survival rate is much lower the his or her individual survival rate. This, again, might explain the fact that some participants select the tontine -- perhaps they believe they are much healthier than the rest of the investment pool -- while some select the annuity. In ongoing work, we are attempting to quantify this effect.
\subsection{Items Ignored}

There are two issues we have not had a chance to address, which we leave for possible future research, and that is the role of \emph{credit risk} as well as the impact of \emph{stochastic mortality}. The existence of credit risk is a much greater concern for the buyers of life annuities, vs. tontines, given the risk taken on by the insurance sponsor. Likewise, under a stochastic mortality model the tontine payout would be more variable and uncertain, which might further reduce the utility of the tontine relative to the life annuity. We leave this for further investigation and focus now on the implications of our (simpler) model.

\section{Numerics and Examples}
\label{numerics}

\subsection{King William's tontine of 1639}

We start by displaying some basic simulation results and summary statistics for the actuarial present value (APV) of cash-flow from a tontine vs. a life annuity. In particular, we compute and display the actuarial (mortality adjusted) mean, standard deviation and skewness of the discounted cash-flows. The mechanics of the simulation itself are reported in the appendix and the parameter values we use coincide with the tontine and annuity payout rates described in section \ref{1693}. This is the easiest and most transparent way to compare the economic value of the tontine to the annuity. From this it is quite clear that the 14\% annuity was offering more -- in expected present value terms -- compared to the 10\% / 7\% tontine. 

\begin{center}
{\bf Table \ref{table04} and Table \ref{table05} go here}
\end{center}

For example, under a 6\% interest rate and (Gompertz smoothed) mortality rates the actuarial present value (APV) of a tontine paying 10\% for seven years and 7\% thereafter, for an $x=10$ year-old nominee is approximately \pounds 133. This is a 33\% premium to the \pounds 100 cost. This assumes a mortality basis as reported by Finlaison (1829). If we assume a higher mortality rate, with a modal value of life $m=50$, then the APV of the tontine is (an even lower) at \pounds 130, albeit with a much higher skewness of $11.16$.

\emph{Ceteris paribus}, at higher ages, higher interest rates, or more aggressive mortality assumptions, the APV is lower.  Notice that the skewness is (positive) and higher under more aggressive mortality assumption, namely when the force of mortality (proxied by $m$) is higher. In contrast to the tontine, under the same 6\% interest rate assumption the actuarial present value (APV) of a life annuity paying 14\% to an $x=10$ year-old nominee is equal to approximately \pounds 185. This is consistent with Halley's (1694) claim that the 14\% life annuity was worth almost 14 years purchase at age ten. Notice also that the skewness for the life annuity is always negative. So, for those investors who might value skewness (all else being equal) the tontine might be preferred to a life annuity, although the present value is clearly lower.

\subsection{Optimal Tontine in the 21st Century}

Figure \ref{fig2} displays the range of possible 4\% tontine dividends over time, assuming an initial pool of $n=400$ nominees, under a Gompertz law of mortality with parameters $m=88.721$ and $b=10$. This mortality basis corresponds to a survival probability of ${}_{35}p_{65}=0.05$, i.e. from age 65 to age 100 and is the baseline values for several of our numerical examples. The figure clearly shows an increasing payment stream conditional on survival, which isn't the optimal function. 

Indeed, in such a (traditional, historical) tontine scheme, the initial expected payment is quite low, relative to what a life annuity might offer in the early years of retirement, while the payment in the final years -- for those fortunate enough to survive -- would be very high and quite variable. It is not surprising then that this form of tontine is both suboptimal from an economic utility point of view and isn't very appealing to individuals who want to maximize their standard of living over their entire retirement years.

\begin{center}
{\bf Figure \ref{fig2} and Figure \ref{fig3} go here}
\end{center}

In contrast to Figure \ref{fig2} which are the sub-optimal tontine, Figure \ref{fig3} displays the range of outcomes from the optimal tontine payout function, under the same interest rate and mortality basis. To be very specific, Figure \ref{fig3} is computed by solving for the value of $D_{n,\gamma}^{\text{OT}}(1)$ and then constructing $D_{n,\gamma}^{\text{OT}}(_tp_x)$ for $n=400, r=0.04$ and $\gamma=1$.  Once the payout function is known for all $t$, the number of survivors at the 10th and 90th percentile of the Binomial distribution is used to bracket the range of the payout from age 65 to age 100. Clearly, the expected payout per survivor is relatively constant over the retirement years, which is much more appealing intuitively. Moreover, the discounted expected utility from a tontine payout such as the one displayed in Figure \ref{fig3} is much higher than the utility of the one displayed in Figure \ref{fig2}. 

\begin{center}
{\bf Table \ref{table06} goes here}
\end{center}

Table \ref{table06} displays the optimal tontine payout function for a very small pool of size $N=25$. These correspond to the $D_{n,\gamma}^{\text{OT}}(_tp_x)$ values derived in section \ref{theory}. Notice how the optimal tontine payout function is quite similar (identical in the first significant digit) regardless of the individual's Longevity Risk Aversion $\gamma$, even when the tontine pool is relatively small at $n=25$. The minimum guaranteed dividend starts-off at about 7\% at age 65 and then declines to approximately 1\% at age 95. Of course, the actual cash-flow payout to an individual, conditional on being alive does not necessarily decline and actually stays relatively constant. 

\begin{center}
{\bf Table \ref{table07} goes here}
\end{center}

Table \ref{table07} displays the utility indifference values for a participant at age 60. Notice how even a retiree with a very high level of Longevity Risk Aversion (LoRA) $\gamma$, will select a tontine (with pool size $n \geq 20$) instead of a life annuity if the insurance loading is greater than 7.5\%

\begin{center}
{\bf Table \ref{table08} goes here}
\end{center}

Finally, \ref{table08} computes the certainty equivalent factors. If an individual with LoRA $\gamma \neq 1$ is faced with a tontine structure that is only (optimal) for someone with LoRA $\gamma=1$ (i.e. logarithmic utility) the welfare loss is minuscule. This is why we advocate the \emph{natural tontine} payout function, which is only optimal for $\gamma=1$, as the basis for 21st century tontines. 

\begin{center}
{\bf Figure \ref{fig4} goes here}
\end{center}

Figure \ref{fig4} shows the difference between the optimal tontine payout function for different levels of Longevity Risk Aversion $\gamma$ is barely noticeable when the tontine pool size is greater than $N=250$, mainly due to the effect of the law of large numbers. This curve traces the \emph{minimum} dividend that a survivor can expect to receive at various ages. The median is (obviously) much higher.

\begin{center}
{\bf Figure \ref{fig5} goes here}
\end{center}

Figure \ref{fig5} illustrates the optimal tontine payout function for someone with logarithmic $\gamma=1$ utility starts-of paying the exact same rate as a life annuity regardless of the number of participants in the tontine pool. But, for higher levels of longevity risk aversion $\gamma$ and a relatively smaller tontine pool, the function starts-off at a lower value and declines at a slower rate

\begin{center}
{\bf Figure \ref{fig6} goes here}
\end{center}

Figure \ref{fig6} show that for a relatively smaller tontine pool size, the retiree who is highly averse to longevity risk $\gamma=25$ will want a guaranteed minimum payout rate (GMPR) at advanced ages that is higher than his or her projected survival probability. In exchange they will accept lower GMPR at lower ages. In contrast, the logarithmic utility maximizer will select a GMPR that is exactly equal to the projected survival probability

\begin{center}
{\bf Figure \ref{fig7} goes here}
\end{center}

Figure \ref{fig7} indicates than an actuarially fair life annuity that guarantees 7.5\% for life starting at age 65 provides more utility than an optimal tontine regardless of Longevity Risk Aversion (LoRA) or the size of the tontine pool. But, once an insurance loading is included, driving the yield under the initial payout from the optimal tontine, the utility of the life annuity might be lower. The indifference loading is $\delta$ and reported in table \ref{table07}.

So here is our main takeaway and idea in the paper, once again. The historical tontine in which dividends to the entire pool are a constant (e.g. 4\%) interest rate over the entire retirement horizon are suboptimal because they create an increasing consumption profile that is both variable and undesirable. However, a tontine scheme in which interest payments to the pool early-on are higher (e.g. 8\%) and then decline over time, so that the few winning centenarians receive a much lower interest rate (e.g. 1\%) is in fact the optimal policy. Coincidently, King William's 1693 tontine had a similar declining structure of interest payments to the pool, which was quite rare historically. 

We are careful to distinguish between the guaranteed \emph{interest} rate (e.g. 8\% or 1\%) paid to the entire pool, and the expected \emph{dividend} to the individual investor in the optimal tontine, which will be relatively constant over time, as is evident from Figure \ref{fig2}. Of course, the present value of the interest paid to the entire pool over time is exactly equal to the original contribution made by the pool itself. We are simply re-arranging and parsing cash-flows of identical present value, in a different manner over time.

We have also shown that the utility loss from a properly designed tontine scheme is quite small when compared to an actuarially fair life annuity, which is the work-horse of the pension economics and lifecycle literature. In fact, the utility of from a tontine might actually be higher than the utility generated by a pure life annuity when the insurance (commission, capital cost, etc.) loading exceeds 10\%. This result should not negate or be viewed as conflicting with the wide-ranging annuity literature which proves the optimality of life annuities in a lifecycle model. In fact, what we show is that it is still optimal to fully hedge longevity risk, but the instrument that one uses to do so depends on the relative costs. In other words, \emph{sharing} longevity risk amongst a relatively small ($n \leq 100$) pool of people doesn't create the large dis-utilities or welfare losses, at least within a classical rational model. This can also be viewed as a further endorsement of the participating life annuity, which lies in between the tontine and the conventional life annuity.

\section{Brief Literature Review}
\label{lit}

This is not the place -- not do we have the space -- for a full review of the literature on tontines, so we provide a selected list of key articles for those interested in further reading. 

The original tontine proposal by Lorenzo Tonti appears in French in Tontine (1654) and was translated and published in English in the wonderful collection of key historical actuarial articles edited by Haberman and Sibbett (1995). The review article by Kopf (1927) and the book by O'Donnell (1936) are quite dated, but do a wonderful job of documenting how the historical tontine operated, discussing its checkered history, and providing a readable biography of some of its earliest promoters in Denmark, Holland, France and England. 

The monograph by Cooper (1972) is devoted entirely to tontines and the foundations of the 19th century (U.S.) tontine insurance industry, which is based on the tontine concept but is somewhat different because of the savings and lapsation component. In a widely cited article, Ransom and Sutch (1987) provide the background and story of the banning of tontine insurance in New York State, and then eventually the entire U.S. The comprehensive monograph by Jennings and Trout (1982) reviews the history of tontines, with particular emphasis on the French period, while carefully documenting payout rates and yields from most known tontines. It is a minor classic in the field and is a primary source for anyone interested in tontines.

For those interested in the pricing of mortality-contingent claims during the 17th and 18th century, as well as the history and background of the people involved, we recommend Alter (1983, 1986), Poitras (2000), Hald (2003), Poterba (2005), Ciecka (2008a, 2008b), Rothschild (2009) as well as Bellhouse (2011), and of course, Homer and Sylla (2005) for the relevant interest rates. Another important reference within the history of mortality-contingent claim pricing is Finlaison (1829) who was the first to document the mortality experience of King William's 1693 tontine participants and argued that life annuities sold by the British government -- at the same price for all ages! -- were severely underpriced. 

More recently, the newspaper article by Chancellor (2001), the book by Lewin (2003) and especially the recent review by McKeever (2009) all provide a very good history of tontines and discuss the possibility of a tontine revival. The standard actuarial textbooks, such as Promislow (2011) or Pitacco, et. al. (2009) for example, have a few pages devoted to the historical tontine principal. More relevantly, a series of papers on pooled annuity funds, for example Piggot, Valdez and Detzel (2005) have attempted to reintroduce tontine-like structures. The closest paper we can find related to our \emph{natural} tontine payout function is the work by Sabin (2010) on a fair annuity tontine, as well as the recent paper by Donnelly, Guillen and Nielsen (2013) which derives an expression for the amount individual are willing to pay to avoid incurring longevity risk.

In sum, although the published research on tontine schemes is scattered across the insurance, actuarial, economic, and history journals, we have come across few, if any, scholarly articles that condemn or dismiss the tontine concept outright.  

\section{Conclusion and Relevance}
\label{conc}

It is not widely known that in the year 1790, the first U.S. Secretary of the Treasury, Alexander Hamilton proposed what some have called the nation's first (financial) Hunger Game. To help reduce a crushing national debt -- something that is clearly not a recent phenomenon -- he suggested the U.S. government replace high-interest revolutionary war debt with new bonds in which coupon payments would be made to a group as opposed to individuals\footnote{Source: Jennings, Swanson and Trout (1988)}. The group members would share the interest payments evenly amongst themselves, provided they were alive. But, once a member of the group died, his or her portion would stay in a pool and be shared amongst the survivors. This process would theoretically continue until the very last survivor would be entitled to the entire -- potentially millions of dollars -- interest payment. This obscure episode in U.S. history has become known as Hamilton's Tontine Proposal, which he claimed -- in a letter to George Washington -- would reduce the interest paid on U.S. debt, and eventually eliminate it entirely.

Although Congress decided to pass on Hamilton's proposal (Hamilton left public life in disgrace after admitting to an affair with a married woman, and soon-after died in duel with the U.S. vice president at the time Aaron Burr) the tontine idea never died. 

U.S. insurance companies began issuing tontine insurance policies -- which are close cousins to Tonti's tontine -- to the public in the mid-19th century, which became wildly popular\footnote{Source: Ransom and Sutch (1987)}. By the start of the 20th century, historians have documented that half of U.S. households owned a tontine insurance policy, which many used to support themselves through retirement. The longer one lived, the greater their payments. This was a personal hedge against longevity, with little risk exposure for the insurance company. Sadly though, due to shenanigans and malfeasance on the part of company executives, the influential New York State Insurance Commission banned tontine insurance in the state, and by 1910 most other states followed. Tontines have been illegal in the U.S. for over a century and most insurance executives have likely never heard of them.

In sum, tontines not-only have a fascinating history but are actually based on sound economic principles, In fact, Adam Smith himself, quoted at the beginning of this paper, noted that tontines are preferred to life annuities. We believe that a strong case can be made for overturning the current ban on tontine insurance -- \emph{allowing both tontine and annuities to co-exist as they did 320 years ago} -- with suitable adjustments to alleviate problems encountered in the early 20th century. Indeed, given the insurance industry's concern for longevity risk capacity, and its poor experience in managing the risk of long-dated fixed guarantees, one can argue that an (optimal) tontine annuity is a triple win proposition for individuals, corporations and governments. 

It is worth noting that under the proposed (EU) Solvency II guidelines for insurer's capital as well as risk-management, there is a renewed focus on {\em total balance sheet} risks. In particular, insurers will be required to hold {\em more} capital against market risk, credit risk and operational risk. In fact, in a recently released report by Moody's Investor Services\footnote{Source: \emph{European Insurers: Solvency II - Volatility of Regulatory Ratios Could Have Broad Implications For European Insurers}, May 2013, available on www.moodys.com}, they claim that \emph{solvency ratios will exhibit a more complex volatility under Solvency II than under Solvency I, as both the available capital and the capital requirements will change with market conditions.} According to many commentators this is likely to translate into higher consumer prices for insurance products with long-term maturities and guarantees. And, although this only applies to European companies (at this point time), it is not unreasonable to conclude that in a global market annuity loadings will increase, making (participating) tontine products relatively more appealing to price-sensitive consumers.

Moreover, perhaps a properly designed tontine product could help alleviate the low levels of voluntary annuity purchases -- a.k.a. the annuity puzzle -- by gaming the behavioral biases and prejudices exhibited by retirees. The behavioral economics literature and advocates of cumulative prospect theory have argued that consumers make decisions based on more general \emph{value functions} with personalized decisions weights. Among other testable hypotheses, this leads to a preference for investments with (highly) skewed outcomes, even when the alternative is a product with the same expected present values\footnote{Source: Barberis (2013)}. Our simulations indicate that the skewness (third moment) of tontine payouts is positive and higher than the skewness of a life annuity, especially if purchased at older ages. We believe this is yet another argument in favor of re-introducing tontines. Of course, whether the public and regulators can be convinced of these benefits remains to be seen, but a debate would be informative.

We are not alone in this quest. Indeed, during the last decade a number of companies around the world -- egged on by scholars and journalists\footnote{See for example: Chancellor (2001), Goldsticker (2007), Pechter (2007), Chung and Tett (2007), as well as the more scholarly articles by Richter and Weber (2011), Rotemberg (2009) and especially Sabin's (2010) Fair Tontine Annuity.} -- have tried to resuscitate the tontine concept (with patents pending of course) while trying to avoid the bans and taints. Although the specific designs differ from proposal to proposal, all share the same idea we described in this paper: Companies act as custodians and guarantee very little. This arrangement requires less capital which then translates into more affordable pricing for the consumer. Once again our models indicate that a properly designed tontine could hold its own against an actuarially fair life annuity and pose a real challenge to a loaded annuity. 

We suspect that the biggest obstacle to bringing back tontines is the name itself and the image it evokes. Might we conclude by suggesting the weightier \emph{Hamiltonian} as an alternative? Each share would be offered for a modest \$10 investment.

\newpage

\newpage

\section{Appendix}
\label{apen}

\subsection{Proofs}
\label{proof1}

To simplify notation, we write $\theta(p)=\theta_{n,\gamma}(p)$, $\beta(p)=\beta_{n,\gamma}(p)$, and $D(p)=D^{\text{OT}}_{n,\gamma}(p)$. 

\begin{proof}[Proof of Lemma \ref{betabound}]
The case $\gamma=1$ is trivial. For $2\le\gamma$ there is a simple proof, since
$$
\theta(p)=E\Big[\Big(\frac{N(p)}{n}\Big)^{\gamma-1}\Big]\ge E\Big[\frac{N(p)}{n}\Big]^{\gamma-1}=\Big(\frac{1+(n-1)p}{n}\Big)^{\gamma-1}=\Big(p+\frac{1-p}{n}\Big)^{\gamma-1}
>p^{\gamma-1}
$$ 
by Jensen, implying $\beta(p)=p\theta(p)>p^\gamma$. To prove the general case $\gamma>0$ requires a more involved argument, based on the following calculation:
\begin{lemma}
$E[\frac{n}{N(p)}]<\frac{1}{p}$. 
\label{reciprocalbound}
\end{lemma}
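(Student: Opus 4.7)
The plan is to compute $E[n/N(p)]$ in closed form, rather than bounding it via inequalities. Writing $N(p) = 1 + K$ with $K \sim \mathrm{Bin}(n-1,p)$, the expectation expands as
$$
E\Big[\frac{n}{N(p)}\Big] = n\sum_{k=0}^{n-1} \frac{1}{k+1}\binom{n-1}{k} p^{k}(1-p)^{n-1-k}.
$$
The key observation is the elementary identity $\frac{1}{k+1}\binom{n-1}{k} = \frac{1}{n}\binom{n}{k+1}$, which absorbs the $1/(k+1)$ factor into a larger binomial coefficient and effectively shifts the summation index.

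Applying this identity and reindexing by $j = k+1$, the sum becomes $\frac{1}{p}$ times the tail of a $\mathrm{Bin}(n,p)$ probability mass function:
$$
E\Big[\frac{n}{N(p)}\Big] = \frac{1}{p}\sum_{j=1}^{n}\binom{n}{j} p^{j}(1-p)^{n-j} = \frac{1-(1-p)^{n}}{p}.
$$
Since $0 < p < 1$ implies $(1-p)^n > 0$, the strict inequality $E[n/N(p)] < 1/p$ follows immediately.

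There really is no serious obstacle here; the whole argument hinges on recognizing the binomial coefficient identity and performing the reindexing. The only thing worth flagging is that this proof actually yields the exact value $\frac{1-(1-p)^n}{p}$, which may be useful downstream when the authors bootstrap from Lemma \ref{reciprocalbound} to the general $\gamma > 0$ case of Lemma \ref{betabound}, since the error term $(1-p)^n/p$ is explicit and small for large $n$ or $p$ away from $0$.
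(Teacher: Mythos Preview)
Your proof is correct and is essentially identical to the paper's own argument: both compute the expectation exactly via the identity $\frac{n}{k+1}\binom{n-1}{k}=\binom{n}{k+1}$, reindex to recognize the $\mathrm{Bin}(n,p)$ tail, and arrive at the closed form $\frac{1-(1-p)^n}{p}<\frac{1}{p}$.
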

\begin{proof}
\begin{align*}
E\Big[\frac{n}{N(p)}\Big]&=\sum_{k=0}^{n-1}\binom{n-1}{k}p^k(1-p)^{n-1-k}\frac{n}{k+1}\\
&=\frac{1}{p}\sum_{k=0}^{n-1}\binom{n}{k+1}p^{k+1}(1-p)^{n-(k+1)}
=\frac{1}{p}[1-(1-p)^n]<\frac{1}{p}.
\end{align*}
\end{proof}
Suppose first that $0<\gamma<1$. 
H\"older's inequality implies that $E[(\frac{n}{N(p)})^a]^{1/a}$ is increasing in $a>0$.  
In particular, whenever $0<a<1$, we have 
\begin{equation}
E\Big[\Big(\frac{n}{N(p)}\Big)^a\Big]^{1/a}<E\Big[\frac{n}{N(p)}\Big]<\frac{1}{p},
\label{inequalityfora}
\end{equation}
by Lemma \ref{reciprocalbound}. 
Taking $a=1-\gamma$, we obtain that 
$\theta(p)<(\frac{1}{p})^{1-\gamma}=p^{\gamma-1}$. Therefore $\beta(p)=p\theta(p)<p^\gamma$ as required. 

Now suppose $\gamma>1$. Set 
$A=\lim_{a\downarrow 0}E[(\frac{n}{N(p)})^a]^{1/a}<\frac{1}{p}$ by \eqref{inequalityfora}. 
By l'H\^opital's rule,
$$
A=e^{\lim_{a\downarrow 0}\frac{1}{a}\log E[(\frac{n}{N(p)})^a]}
=e^{\lim_{a\downarrow 0}E[(\frac{n}{N(p)})^a\log(\frac{n}{N(p)})]/E[(\frac{n}{N(p)})^a]}
=e^{-E[\log(\frac{N(p)}{n})]}.
$$
Taking log's, we obtain that 
\begin{equation}
E\Big[\log\big(\frac{N(p)}{n}\big)\Big]>\log p.
\label{loginequality}
\end{equation}
As above,
$E[(\frac{N(p)}{n})^a]^{1/a}$ is increasing in $a>0$, so by \eqref{loginequality} and the same l'H\^opital argument as before, 
$$
E\Big[\Big(\frac{N(p)}{n}\Big)^{\gamma-1}\Big]^{\frac{1}{\gamma-1}}>
\lim_{a\downarrow 0}E\Big[\Big(\frac{N(p)}{n}\Big)^a\Big]^{1/a}
=e^{E[\log(\frac{N(p)}{n})]}>e^{\log p}=p.
$$
Therefore $\theta(p)>p^{\gamma-1}$, so $\beta(p)=p\theta(p)>p^\gamma$, proving Lemma \ref{betabound}. 
\end{proof}

\begin{proof}[Proof of Theorem \ref{utilityinequality}]
The case $\gamma\neq 1$ follows immediately from Lemma \ref{betabound}. When $\gamma=1$, we have
$u(c)=\log c$ so $U_{n,1}^{\text{OT}}=\int_0^\infty e^{-rt}{}_tp_x \,E[\log(\frac{nc_0\,{}_tp_x}{N({}_tp_x)})]\,dt$, while 
$U_1^{\text{A}}=\int_0^\infty e^{-rt}{}_tp_x \,\log(c_0)\,dt$. Therefore 
$$
U_1^{\text{A}}-U_{n,1}^{\text{OT}}=\int_0^\infty e^{-rt}{}_tp_x \Big(E\Big[\log\big(\frac{N({}_tp_x)}{n}\big)\Big]-\log({}_tp_x)\Big)\,dt,
$$
and the result now follows by \eqref{loginequality}.

\end{proof}

\subsection{Indifference loading}
\begin{proof}[Proof of Theorem \ref{loadinginequality}]
Since $1<\gamma\le 2$, $c^{\gamma-1}$ is concave in $c$, so lies below its tangents. Therefore
$$
\theta(p)=E\Big[\Big(\frac{N(p)}{n}\Big)^{\gamma-1}\Big]
\le E\Big[p^{\gamma-1}+(\gamma-1)p^{\gamma-2}\Big(\frac{N}{n}-p\Big)\Big]
=p^{\gamma-1}+(\gamma-1)p^{\gamma-2}\frac{1-p}{n}.
$$
$c^{1/\gamma}$ is also strictly concave in $c$, so in the same way
$$
\beta(p)^{\frac{1}{\gamma}}=(p\theta(p))^{\frac{1}{\gamma}}
<(p^\gamma)^{\frac{1}{\gamma}}+\frac{1}{\gamma}(p^\gamma)^{\frac{1}{\gamma}-1}\cdot (\gamma-1)p^{\gamma-1}\frac{1-p}{n}
=p+\frac{(\gamma-1)(1-p)}{\gamma n}.
$$
Therefore 
$$
\int_0^\infty e^{-rt}\beta({}_tp_x)^{\frac{1}{\gamma}}\,dt < \frac{1}{c_0} + \frac{\gamma-1}{\gamma n}\Big(\frac{1}{r}-\frac{1}{c_0}\Big).
$$
By definition, 
$$
\frac{c_0^{-\gamma}}{1-\gamma}(1-\delta)^{1-\gamma}
=\frac{1}{1-\gamma}\Big(\int_0^\infty e^{-rt}\beta({}_tp_x)^{1/\gamma}\\,dt\Big)^\gamma,
$$
and $1-c^{\frac{\gamma}{1-\gamma}}$ is also concave in $c$, so as before 
$$
\delta=1-\Big(c_0\int_0^\infty e^{-rt}\beta({}_tp_x)^{\frac{1}{\gamma}}\,dt\Big)^{\frac{\gamma}{1-\gamma}}\cdot 
<-\frac{\gamma}{1-\gamma}\cdot \frac{\gamma-1}{\gamma n}\Big(\frac{c_0}{r}-1\Big)=\frac{1}{n}\Big(\frac{c_0}{r}-1\Big)
$$
as required.
\end{proof}
For any $\gamma$ one can derive (using the second moment of $N(p)$ now, as well as the first) the asymptotic result that $\delta n\to \frac{\gamma}{2}(\frac{c_o}{r}-1)$. But convergence turns out to be so slow that this precise asymptotic is of limited use. (The slow convergence derives from the observation that, with Gompertz mortality, the time $t$ till ${}_tp_x$ reaches $\frac{1}{n}$ grows only at rate $b\log\log n$.) For example, with $\gamma=2$, $r=3\%$, age $x=50$, and Gompertz parameters $m=87.25$ and $b=9.5$, we obtain $\frac{\gamma}{2}(\frac{c_o}{r}-1)=0.6593$; But for $n=$ 10, 100, or 1000 we only have $\delta n=$ 0.2858, 0.3377, and 
0.3671; In fact, even with $n=7\times 10^9$ tontine participants (roughly the world's entire population, all postulated to share the same age and hazard rate), we would only reach $\delta n = 0.4417$

\subsection{Certainty Equivalents, and the Natural Tontine}
We wish to compare the welfare loss experienced by an individual with longevity risk aversion $\gamma\neq 1$, if they participate in a natural tontine rather than an optimal one. We therefore calculate the ratio $\Gamma\ge 1$ of the certainty equivalents of the two tontines. This represents the initial deposit into a natural tontine needed to provide the same utility as a \$1 deposit into an optimal one.
The natural tontine has utility 
$$
U_{n,\gamma}^{\text{N}}=\frac{c_0^{1-\gamma}}{1-\gamma}\int_0^\infty e^{-rt}{}_tp_x^{2-\gamma}\theta_{n,\gamma}({}_tp_x)\,dt.
$$
Therefore
$$
\Gamma
=\left[\frac{U^{\text{OT}}_{n,\gamma}}{U^{\text{N}}_{n,\gamma}}\right]^{\frac{1}{1-\gamma}}
=\left[\frac{D^{\text{OT}}_{n,\gamma}(1)^{-\gamma}}{c_0^{1-\gamma}\int_0^\infty e^{-rt}{}_tp_x^{2-\gamma}\theta_{n,\gamma}({}_tp_x)\,dt}\right]^{\frac{1}{1-\gamma}}
$$
which is then leads to:
$$
\Gamma=\frac{\big(\int_0^\infty e^{-rt}{}_tp_x\,dt\big)\big(\int_0^\infty e^{-rt}\beta_{n,\gamma}({}_tp_x)^{\frac{1}{\gamma}}\,dt)^{\frac{\gamma}{1-\gamma}}}{\big(\int_0^\infty e^{-rt}{}_tp_x^{2-\gamma}\theta_{n,\gamma}({}_tp_x)\,dt\big)^{\frac{1}{1-\gamma}}}.
$$

When we compute these values numericaly, for $0<\gamma\le 2$, we get values quite close to 1. Moreover, if we let 
$n\to\infty$, then 
$\beta_{n,\gamma}(p)^{1/\gamma}\to p$ and $\theta_{n,\gamma}(p)\to p^{-(1-\gamma)}$, which makes $\Gamma\to 1$ asymptotically, as long as $\gamma\le 2$.

For $\gamma>2$ this is not a fair comparison, as the integral diverges and $U_{n,\gamma}^{\text{N}}=\infty$. It is worth understanding why. For the optimal tontine, the integrand involves $[p\theta_{n,\gamma}(p)]^{1/\gamma}\sim p^{\frac{1}{\gamma}}n^{\frac{1-\gamma}{\gamma}}$ ie. is well behaved as $p\to 0$. On the other hand, the integral for the natural tontine involves $p^{2-\gamma}\theta_{n,\gamma}(p)$ which can be large when $p\to 0$ if $\gamma>2$. This means that for $\gamma>2$ the natural tontine utility is unduly influenced by the possibility of living to highly advanced ages. Even if there is only a single survivor, that survivor's payout will have dropped to quite low levels by the time it is actually improbable that anyone will live that long. And for $\gamma>2$ the negative consequences of the low payout dominate the small probability of surviving that long. 

\subsection{Description of Tontine Simulation}

We start with a pool of $n=1000$ (for example) homogenous individuals who assume the role of both the annuitant and nominee, each of age $x$ with maximum lifespan $\omega=105$, at which point everyone is dead. These $n$ individuals contribute $\pounds w$ to the tontine pool for a total of $\pounds wn$. We use the symbol triplet $L(x,i,t)$ to denote the life state of the $i^{th}$ individual in year $t=\omega-x$. Formally, $L(x,i,t)=1$ while the $i^{th}$ nominee is still alive and $L(x,i,t)=0$ once the nominee is dead. We start with $L(x,i,0)=1$. The next value of $L(x,i,t)$ is obtained by simulating a standard uniform [0,1] �killing� random variable $u$, and then setting the variable $L(x,i,t)=0$ whenever $u<q_{x+t}$ where $q_{x+t}$ is the mortality rate applicable at age $(x+t)$. So, for example, if the mortality rate at age 30 is $q_{30} = 0.15$, and the random variable outcome for $u = 0.2$, then the individual survives. But, if $u = 0.1$, the individual is killed and all future $L(x,i,t)$ values are set to zero. (No resurrections allowed!) Obviously, the greater the value of $q_{x+t}$, the higher the probability (and realization) of death.  This might not be the most efficient or fastest way to simulate the matrix of lifespans, but it's the easiest to explain. 

In sum, for each simulation run denoted by $j$, this process generates a (big) matrix $L(j)$ with $n$ rows and $(\omega-x+1)$ columns. The first row is set to be all ones -- everyone starts alive -- and all rows slowly decay to zeros over time. Finally, the last column is (forced) to be all zeros. We set the baseline to be M = 10,000 simulations so that: $1\leq j \leq 10000$, and the initial age will be $x=10$, which was the average age of the nominees in King William's tontine of 1693.

At the end of each year a total of $\pounds wnd(t)$ is distributed as a dividend to the survivors, where d(t) denotes the value of the tontine payout function at time $t$. For the 1693 tontine for example, $d(t)  = 10\%$ for the first 7 years (to the year 1700), and then $d(t) = 7\%$ thereafter. This then generates a new matrix $D(x,i,t)$ which denotes the cash dividend to the $i^{th}$ individual in the $t^{th}$ year. Note that $D(x,i,t) = 0$, whenever $L(x,i,t)=0$. Dead people share no dividends. The process of computing $D(x,i,t)$ is rather simple, assuming nominee $(x,i)$ is alive in time period $t.$ Namely, divide $wnd(t)$ which is the total interest payable to the surviving pool members, by the number of people alive, which is sum of $L(x,i,t)$ from $i=1$ to $i = n$. Formally it is:

$$
D(x,i,t) = \frac{wnd(t)}{\sum_{i=1}^N L(x,i,t)}
$$

Now, the main quantity we interested in computing is the variable $PV(x,i,j)$, which is the present value of the tontine payout to nominee $(x,i)$, in simulation run number $j \leq M$. Formally it is defined as:

$$
PV(x,i,j) = \sum_{t=1}^{\omega-x} \frac{D(x,i,t)}{(1+R)^t}
$$
where R is the valuation rate (assumed to be 6\% for most cases). Algorithmically, the entire numerator vector is simulated -- first, based on the number of other survivors -- and then discounted to arrive at a present value for the tontine payoff to nominee $(x,i)$, in one particular simulation run. Notice that once $L(x,i,t)$ is zero (the nominee is dead), then $D(x,i,t)$ is zero as well so the entire summation is valid. Every simulation run $j$, will generate a value for $PV(x,i,j)$ for a total of $M = 10,000$ present values for a given representative nominee $(x,i).$ 

Finally, we are interested in the sample mean, standard deviation, skewness and kurtosis of these $M = 10,000$ (simulated) present values. For further clarity of notation, note that the (simulation) sample mean, which is an actuarial present value, is defined as:

$$
APV(x,i) = \frac{1}{M} \sum_{j=1}^M PV(x,i,j)
$$

The (simulation) sample standard deviation is:

$$
SDPV(x,i) = \sqrt{\frac{1}{M} \sum (PV(x,i,j)-APV(x,i))^2}
$$
The sample skewness and kurtosis is defined in a similar manner. 

\newpage

\begin{table}[h]
\begin{center}
\begin{tabular}{||c||c||c||c||}
\hline\hline
\multicolumn{4}{||c||}{\textbf{King William's Tontine of 1693}} \\ 
\hline\hline
& \textit{\# of Nominees} & \textit{\# of Shares} & \textit{Average Age at
Nomination} \\ \hline\hline
\textbf{Males } & 604 & 653 & 10.85 \\ \hline\hline
\textbf{Females} & 409 & 428 & 10.98 \\ \hline\hline
\textbf{Total} & 1,013 & 1,081 & 10.90 \\ \hline\hline
\multicolumn{4}{||c||}{\footnotesize Source: Raw data from Howard (1694). Compiled by The IFID Centre} \\ \hline\hline
\end{tabular}\medskip
\caption{1,013 nominees remained in the tontine, after the annuity conversion option expired. There was a single share class, paying a guaranteed dividend of \pounds 10 (semiannually) until June 1700 and then \pounds 7 thereafter, until seven nominees remained. Surprisingly, over 60\% of the nominees were male, while the last nominee (a female) died at the age of 100 in the year 1783 after receiving a tontine dividend of \pounds 1,081.  \label{table01}}
\smallskip
\end{center}
\end{table}

\begin{table}[h]
\begin{center}
\begin{tabular}{||c||c||c||}
\hline\hline
\multicolumn{3}{||c||}{\textbf{King William's Tontine of 1693}} \\ 
\hline\hline
\textbf{\# of Nominees} & \textbf{Shares Purchased} & \textbf{Total Shares}
\\ \hline\hline
\textbf{956} & 1 & 956 \\ \hline\hline
\textbf{51} & 2 & 102 \\ \hline\hline
\textbf{3} & 3 & 9 \\ \hline\hline
\textbf{1} & 4 & 4 \\ \hline\hline
\textbf{2} & 5 & 10 \\ \hline\hline
\textbf{1,013} &  & 1,081 \\ \hline\hline
\multicolumn{3}{||c||}{\footnotesize Source: Raw data from Howard (1694). Compiled by The IFID Centre } \\ \hline\hline
\end{tabular}\medskip
\caption{The gap between the number of nominees (1013) and the number of tontine shares (1081) has caused some confusion over the years. Various sources have reported different values for the size of the original tontine pool. It is also worth noting that the average age of nominees with multiple shares (13.7 years) was higher than the average age of one-share nominees (10.7 years) by three years. So in fact, nominees with multiple shares contingent on their life died earlier (on average.) \label{table02}}
\end{center}
\end{table}

\begin{table}
\begin{center}
\begin{tabular}{||c||c||c||c||}
\hline\hline
\multicolumn{4}{||c||}{\textbf{King William's Tontine of 1693}} \\ 
\hline\hline
\textbf{Age} & \textbf{Number of Nominees} & \textbf{Alive in
1730} & \textbf{Alive in 1749} \\ \hline\hline
\textbf{0-2} & 99 = 50m + 49f & 58 = 25m + 30f & 34 = 20m + 14f \\ 
\hline\hline
\textbf{3-5} & 183 = 108m + 75f & 105 = 60m + 45f & 60 = 36m + 24f \\ 
\hline\hline
\textbf{6-8} & 174 = 111m + 63f & 93 = 53m + 40f & 53 = 30m + 23f \\ 
\hline\hline
\textbf{9-11} & 181 = 102m + 79f & 100 = 48m + 52f & 52 = 25m + 27f \\ 
\hline\hline
\textbf{12-14} & 138 = 82m + 56f & 63 = 34m + 29f & 35 = 17m + 18f \\ 
\hline\hline
\textbf{15-17} & 69 = 37m + 32f & 31 = 16m + 15f & 14 = 4m + 10f \\ 
\hline\hline
\textbf{18-20} & 50 = 25m + 25f & 22 = 11m + 11f & 7 = 3m + 4f \\ 
\hline\hline
\textbf{21-23} & 41 = 27m + 14f & 17 = 9m + 8f & 7 = 3m + 4f \\ \hline\hline
\textbf{24-26} & 22 = 14m + 8f & 10 = 6m + 4f & 5 = 3m + 2f \\ \hline\hline
\textbf{27-29} & 14 = 8m + 6f & 3 = 3m + 0f & 0 \\ \hline\hline
\textbf{30-32} & 16 = 10m + 6f & 6 = 4m + 2f & 0 \\ \hline\hline
\textbf{33-35} & 7 = 5m + 2f & 3 = 1m + 2f & 0 \\ \hline\hline
\textbf{36-38} & 7 = 4m + 3f & 1 = 1m + 0f & 0 \\ \hline\hline
\textbf{39-41} & 6 = 4m + 2f & 2 = 1m + 1f & 0 \\ \hline\hline
\textbf{42-44} & 1 = 1m + 0f & 0 & 0 \\ \hline\hline
\textbf{45-47} & 3 = 3m  + 0f & 0 & 0 \\ \hline\hline
\textbf{48-50} & 1 = 0m + 1f & 0 & 0 \\ \hline\hline
\textbf{51-53} & 1 = 0m + 1f & 0 & 0 \\ \hline\hline
\textbf{Avg. Age} & 11.10 & 9.83 (46.83) & 8.59 (64.59) \\ 
\hline\hline
\textbf{Total:} & 1013 & 514 & 267 \\ \hline\hline
\multicolumn{4}{||c||}{\footnotesize Source: Raw data from Howard (1694), Anonymous (1730) \& Anonymous (1749)} \\ \hline\hline
\end{tabular}\medskip
\caption{While it is quite natural that more infants weren't nominated, given high child mortality rates in the first few years of life, it is rather puzzling that so many tontine nominees were of such advanced ages. In fact, by the year 1749 none of the original nominees above the age of 26 were still alive. \label{table03}}
\end{center}
\end{table}

\begin{table}
\begin{center}
\begin{tabular}{||c||c||c||c||c||}
\hline\hline
\multicolumn{5}{||c||}{\textbf{King William's Tontine of 1693:
Simulated Results at Age 10 per \pounds 100}} \\ \hline\hline
\textbf{Interest} & \textbf{Mortality Basis: GoMa.} & \textbf{APV} & 
\textbf{SDev. PV} & \textbf{Skew. PV} \\ \hline\hline
\textbf{4\%} & $l=0.0104,m=69.5,b=13.8$ & \pounds 186.54 & \pounds %
96.05 & 4.06 \\ \hline\hline
\textbf{4\%} & $l=0, m=50,b=10$ & \pounds 174.91 & \pounds 96.98 & 13.18
\\ \hline\hline
\textbf{6\%} & $l=0.0104,m=69.5,b=13.8$ & \pounds 133.02 & \pounds %
45.74 & -0.46 \\ \hline\hline
\textbf{6\%} & $l=0, m=50,b=10$ & \pounds 130.31 & \pounds 48.5 & 11.16 \\ 
\hline\hline
\textbf{8\%} & $l=0.0104,m=69.5,b=13.8$ & \pounds 103.15 & \pounds %
28.88 & -1.73 \\ \hline\hline
\textbf{8\%} & $l=0, m=50,b=10$ & \pounds 102.10 & \pounds 21.41 & 2.42 \\ 
\hline\hline
\multicolumn{5}{||c||}{\footnotesize Notes: Based on simulation assumptions described in the appendix.} \\ \hline\hline
\end{tabular}\medskip
\caption{Under a 6\% interest rate and (Gompertz Makeham smoothed) mortality rates as reported in Finlaison (1829), the actuarial present value (APV) of a tontine paying 10\% for seven years and 7\% thereafter, for an $x=10$ year-old nominee is approximately \pounds 133. This is a 33\% premium to the cost.  At ages higher than $x=10$, higher interest rates or more aggressive mortality assumption, the APV is lower. \label{table04}}
\end{center}
\end{table}

\begin{table}
\begin{center}
\begin{tabular}{||c||c||c||c||c||}
\hline\hline
\multicolumn{5}{||c||}{\textbf{Life Annuity Paying 14\% Income: Simulated
Results at Age 10 per \pounds 100}} \\ \hline\hline
\textbf{Interest} & \textbf{Mortality Basis: GoMa.} & \textbf{APV} & 
\textbf{SDev. of APV} & \textbf{Skew. of APV} \\ \hline\hline
\textbf{4\%} & $l=0.0104,m=69.5,b=13.8$ & \pounds 244.05 & \pounds %
87.07 & -1.19 \\ \hline\hline
\textbf{4\%} & $l=0, m=50,b=10$ & \pounds 245.16 & \pounds 54.24 & -1.73
\\ \hline\hline
\textbf{6\%} & $l=0.0104,m=69.5,b=13.8$ & \pounds 184.53 & \pounds %
57.88 & -1.60 \\ \hline\hline
\textbf{6\%} & $l=0,m=50,b=10$ & \pounds 191.13 & \pounds 35.44 & -2.44
\\ \hline\hline
\textbf{8\%} & $l=0.0104,m=69.5,b=13.8$ & \pounds 147.55 & \pounds %
41.17 & -1.99 \\ \hline\hline
\textbf{8\%} & $l=0,m=50,b=10$ & \pounds 155.38 & \pounds 23.19 & -3.19
\\ \hline\hline
\multicolumn{5}{||c||}{\footnotesize Notes: Based on simulation assumptions described in the appendix} \\ \hline\hline
\end{tabular}\medskip
\caption{Under a 6\% interest rate and (Gompertz Makeham smoothed) mortality rates as reported in Finlaison (1829), the actuarial present value (APV) of a life annuity paying 14\% to an $x=10$ year-old nominee is worth approximately \pounds 185. This is consistent with Halley's (1694) claim that the 14\% life annuity was worth almost 14 years purchase at age ten. \label{table05}}
\end{center}
\end{table}

\begin{table}
\begin{center}
\begin{tabular}{||c||c||c||c||}
\hline\hline
\multicolumn{4}{||c||}{\textbf{Optimal Tontine Payout Function: Pool of Size}
$n=25$} \\ \hline\hline
LoRA ($\gamma $) & Payout Age 65 & Payout Age 80 & Payout
Age 95 \\ \hline\hline
0.5 & 7.565\% & 5.446\% & 1.200\% \\ \hline\hline
1.0 & 7.520\% & 5.435\% & 1.268\% \\ \hline\hline
1.5 & 7.482\% & 5.428\% & 1.324\% \\ \hline\hline
2.0 & 7.447\% & 5.423\% & 1.374\% \\ \hline\hline
4.0 & 7.324\% & 5.410\% & 1.541\% \\ \hline\hline
9.0 & 7.081\% & 5.394\% & 1.847\% \\ \hline\hline
\textbf{Survival} & ${}_0p_{65}=$100\% & ${}_{15}p_{65}=$72.2\% & ${}_{30}p_{65}=$16.8\% \\ \hline\hline
\multicolumn{4}{||c||}{\footnotesize Notes: Assumes $r=4\%$ and Gompertz Mortality ($m=88.72,b=10$)} \\ \hline\hline
\end{tabular}\medskip
\caption{Notice how the optimal tontine payout function $D_{n,\gamma}^{\text{OT}}({}_tp_x)$ is quite similar (identical in the first significant digit) regardless of the individual's Longevity Risk Aversion $\gamma$, even when the tontine pool is relatively small at $n=25$. The minimum guaranteed dividend starts-off at about 7\% at age 65 and then declines to approximately 1\% at age 95. Of course, the actual cash-flow received by a survivor does not necessarily decline and stays relatively constant. \label{table06}}
\end{center}
\end{table}

\begin{table}[h]
\begin{center}
\begin{tabular}{||c||c|c|c|c|c||}
\hline\hline
\multicolumn{6}{||c||}{{\bf The Highest Annuity Loading $\delta$ You Are Willing to Pay}} \\ \hline\hline
\multicolumn{6}{||c||}{{\bf If a Tontine Pool of Size $n$ is Available}} \\ \hline\hline
LoRA $\gamma$ & $n=20$ & $n=100$ & $n=500$ & $n=1000$ & $n=5000$\\ \hline\hline
0.5 & 72.6 b.p. & 14.5 b.p. & 2.97 b.p. & 1.50 b.p. & 0.30 b.p. \\ \hline
1.0 &   129.8 b.p. & 27.4 b.p. & 5.74 b.p. & 2.92 b.p. & 0.60 b.p.  \\ \hline
1.5 & 182.4 b.p. & 39.8 b.p. & 8.45 b.p. & 4.31 b.p. & 0.89 b.p. \\ \hline
2.0 & 231.7 b.p. & 51.8 b.p. & 11.1 b.p. & 5.68 b.p. & 1.18 b.p. \\ \hline
3.0 & 323.1 b.p. &  75.1 b.p. &  16.3 b.p. & 8.38 b.p. & 1.75 b.p. \\ \hline
9.0 & 753.6 b.p. &  199.8 b.p. &  45.9 b.p. &  23.8 b.p. &  5.09 b.p. \\ \hline
\hline
\multicolumn{6}{||c||}{\footnotesize Assumes Age $x=60$, $r=3\%$ and Gompertz Mortality ($m=87.25,b=9.5$)} \\ \hline\hline
\end{tabular}\medskip
\caption{Even a retiree with a very high level of Longevity Risk Aversion (LoRA) $\gamma=9$, will select a tontine (with pool size $n \geq 20$) instead of a life annuity if the insurance loading is greater than 7.5\%; Optimal tontine's aren't so bad!}
\label{table07}
\smallskip
\end{center}
\end{table}

\begin{table}[h]
\begin{center}
\begin{tabular}{||c||c|c|c||}
\hline\hline
\multicolumn{4}{||c||}{{\bf Natural vs. Optimal Tontine}} \\ \hline\hline
\multicolumn{4}{||c||}{{\bf Certainty Equivalent for $n=100$}} \\ \hline\hline
Age $x$ & $\gamma=0.5$ & $\gamma=1$ & $\gamma=2$  \\ \hline\hline
30  &  1.000018	 &  1	 &  1.000215 \\  \hline
40  &  1.000026	 &  1	 &  1.000753 \\  \hline
50  &  1.000041	 &  1	 &  1.001674 \\  \hline
60  &  1.000067	 &  1	 &  1.003388 \\  \hline
70  &  1.000118	 &  1	 &  1.003451 \\  \hline
80  &  1.000225	 &  1	 &  1.009877\\
\hline\hline
\multicolumn{4}{||c||}{\footnotesize $r=3\%$ and Gompertz $m=87.25,b=9.5$} \\ \hline\hline
\end{tabular}\medskip
\caption{If an individual with LoRA $\gamma \neq 1$ is faced with a tontine structure that is only (optimal) for someone with LoRA $\gamma=1$ (i.e. logarithmic utility) the welfare loss is minuscule. This is why we advocate the \emph{natural tontine} payout function, which is only optimal for $\gamma=1$ as the basis for 21st century tontines. \label{table08}}
\smallskip
\end{center}
\end{table}

\begin{figure}[here]
\begin{center}
\includegraphics[width=1.0\textwidth]{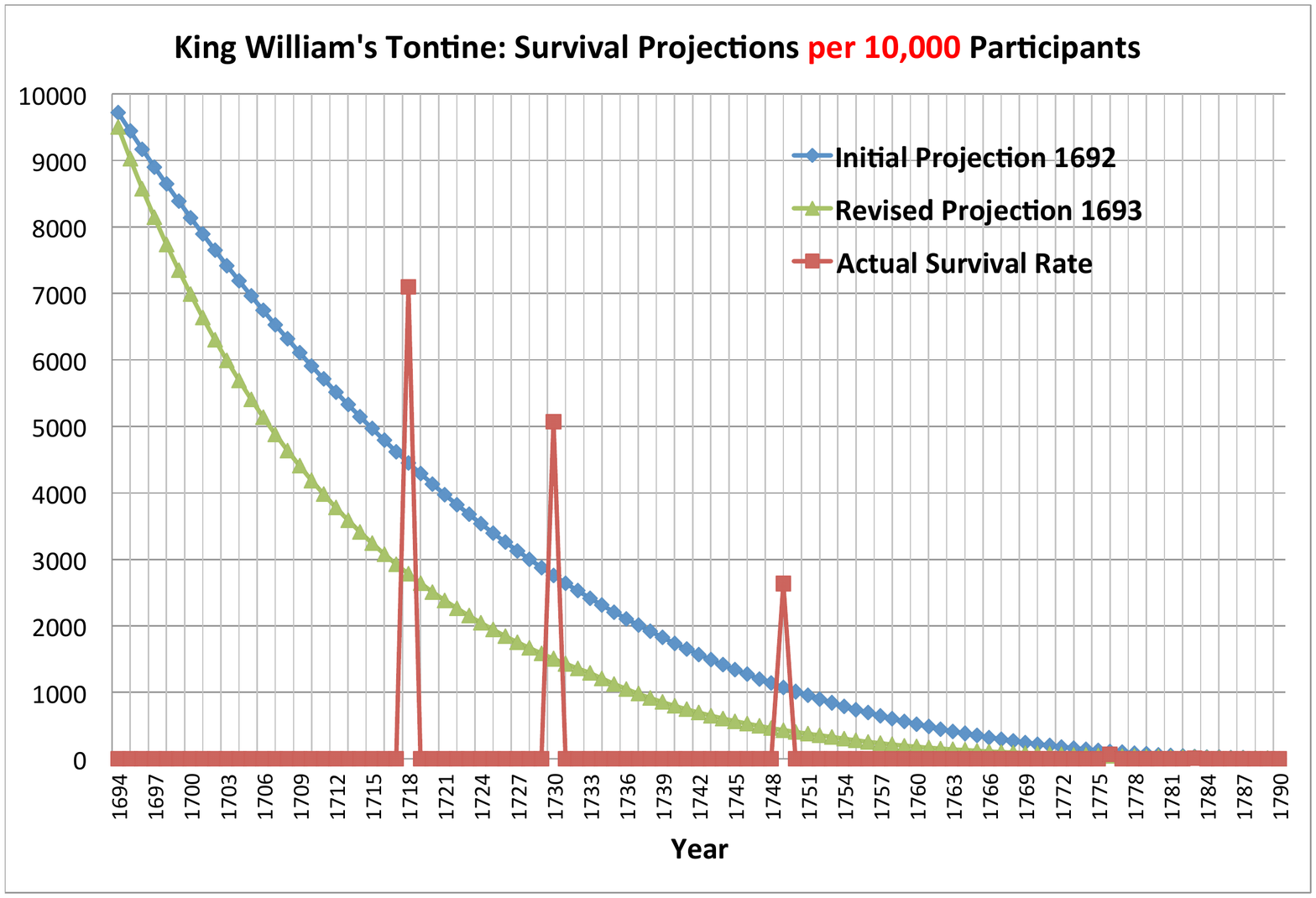} 
\caption{The initial projections made by the promoters of the tontine in late 1692 were too optimistic. The revised projections made in mid-1693 were wildly optimistic. In fact, the 70\% survival rate observed by the year 1718 led to accusations of fraud. No other data points are available. Data source: Lewin (2003), Walford (1871). Compiled by The IFID Centre.}
\label{fig1}
\end{center}
\end{figure}

\begin{figure}[here]
\begin{center}
\includegraphics[width=1.0\textwidth]{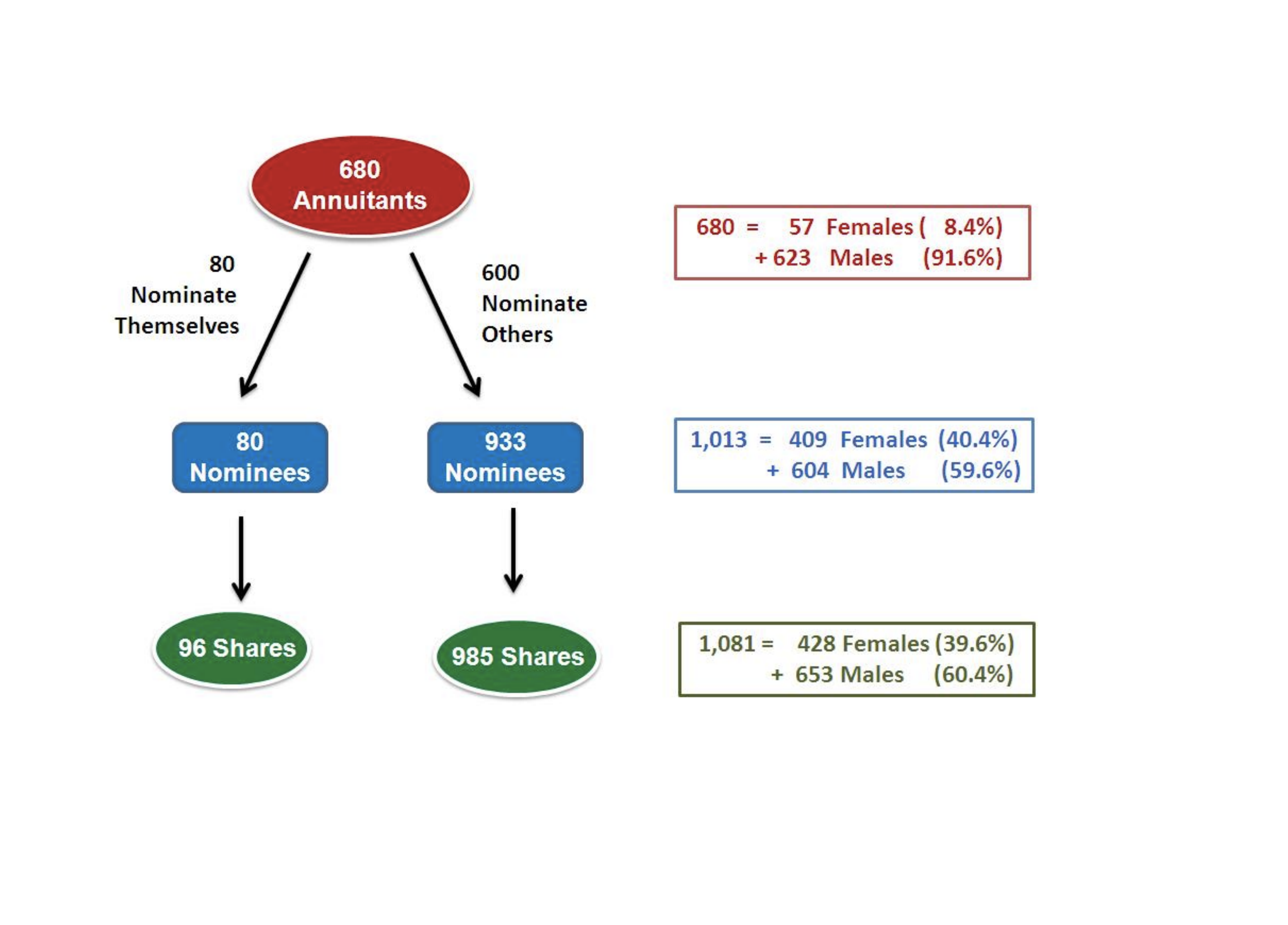} 
\caption{Final verified tally of the annuitants (=investors), nominees and number of shares purchased in King William's tontine of 1693, as well as their gender composition. Compiled by The IFID Centre based on the original records printed in Howard (1694).}
\label{fig1b}
\end{center}
\end{figure}

\begin{figure}[here]
\begin{center}
\includegraphics[width=1.0\textwidth]{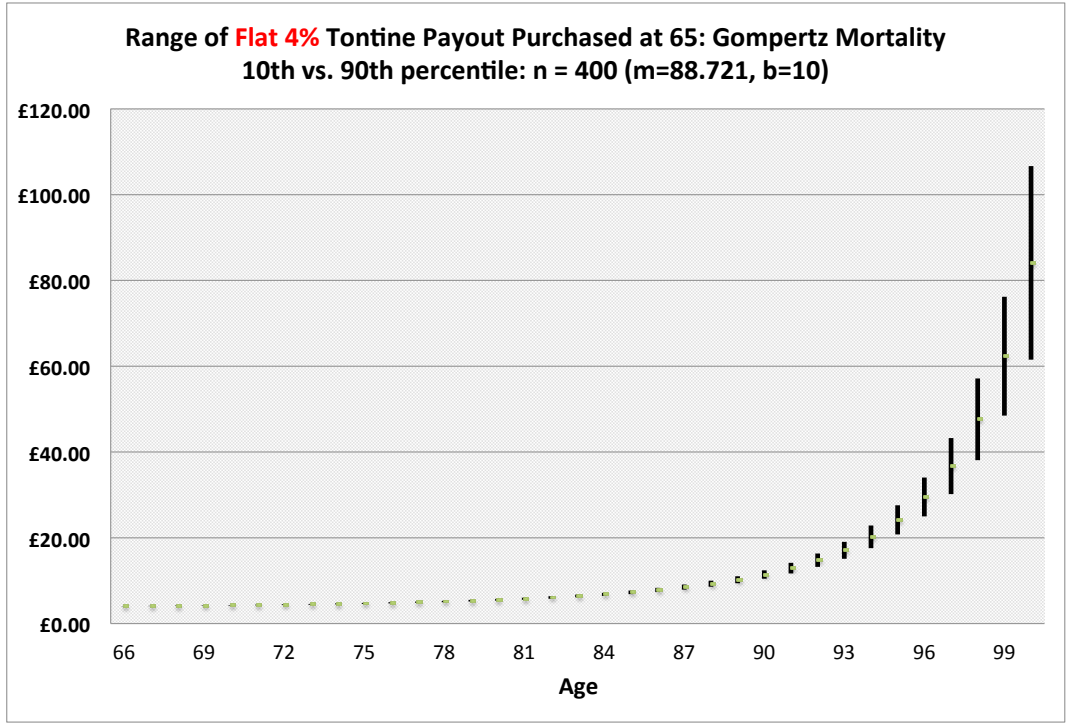} 
\caption{The range of $d_0=4\%$ tontine dividends during the first few decades or retirement is relatively low and predictable for a pool size in the hundreds. The dividends increase exponentially at later ages and the 80\% range is much wider as well. But this is not the only way to construct a tontine.}
\label{fig2}
\end{center}
\end{figure}

\begin{figure}[here]
\begin{center}
\includegraphics[width=1.0\textwidth]{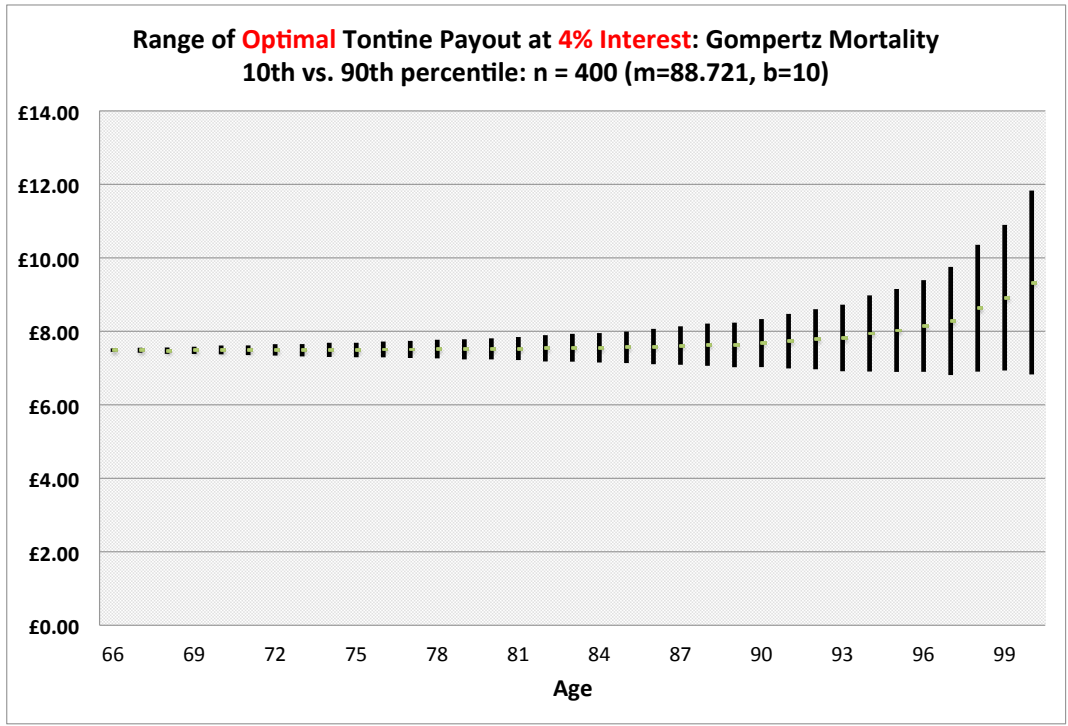} 
\caption{The optimal tontine pays survivors a cash-value that is expected to remain relatively constant over time, conditional on survival. Although the 80\% range of outcomes does increase at higher ages given the inherent uncertainty in the number of survivors from an initial pool of $n=400$. This structure is optimal for logarithmic $\gamma=1$ utility and nearly optimal for all other levels of Longevity Risk Aversion.}
\label{fig3}
\end{center}
\end{figure}

\begin{figure}[here]
\begin{center}
\includegraphics[width=1.0\textwidth]{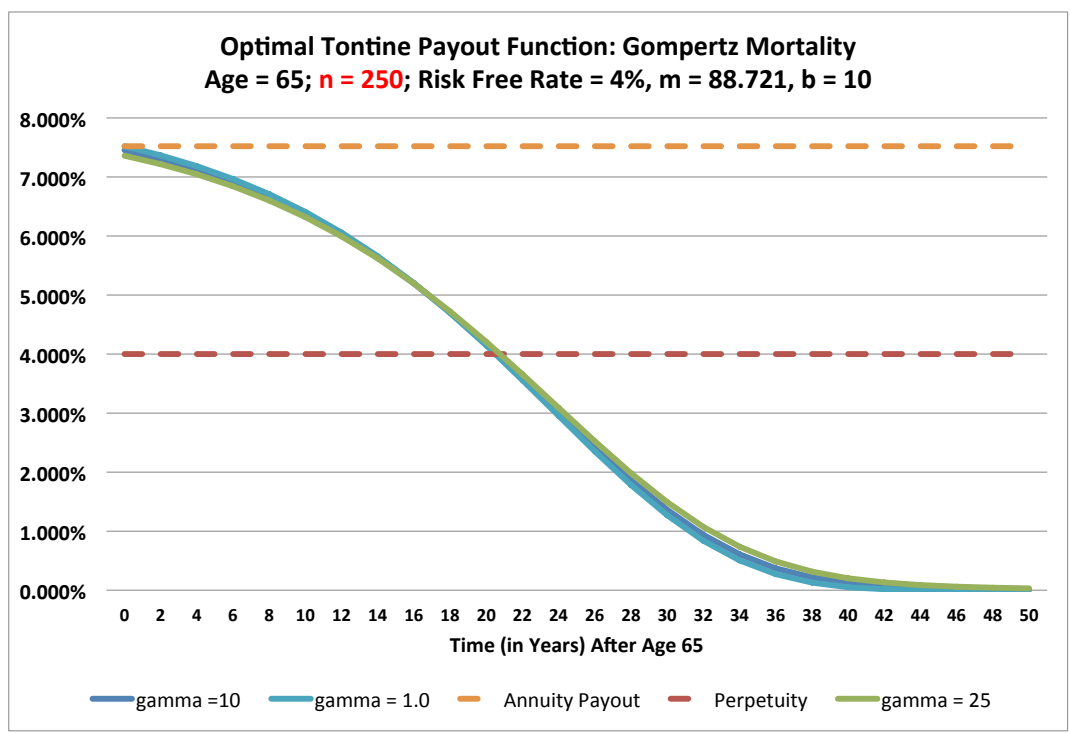} 
\caption{The difference between the optimal tontine payout function for different levels of Longevity Risk Aversion $\gamma$ is barely noticeable when the tontine pool size is greater than $N=250$, mainly due to the effect of the law of large numbers. This curve traces the \emph{minimum} dividend that a survivor can expect to receive at various ages. The expected is (obviously) much higher.}
\label{fig4}
\end{center}
\end{figure}

\begin{figure}[here]
\begin{center}
\includegraphics[width=1.0\textwidth]{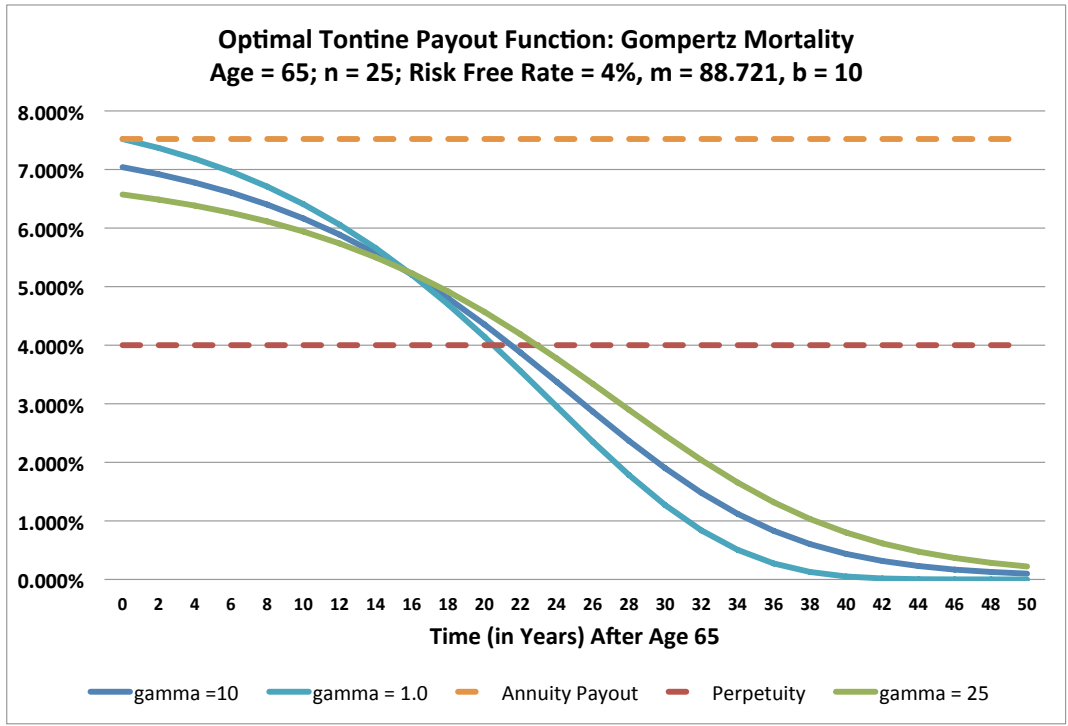} 
\caption{The optimal tontine payout function for someone with logarithmic $\gamma=1$ utility starts-of paying the exact same rate as a life annuity regardless of the number of participants in the tontine pool. But, for higher levels of longevity risk aversion $\gamma$ and a relatively smaller tontine pool, the function starts-off at a lower value and declines at a slower rate}
\label{fig5}
\end{center}
\end{figure}

\begin{figure}[here]
\begin{center}
\includegraphics[width=1.0\textwidth]{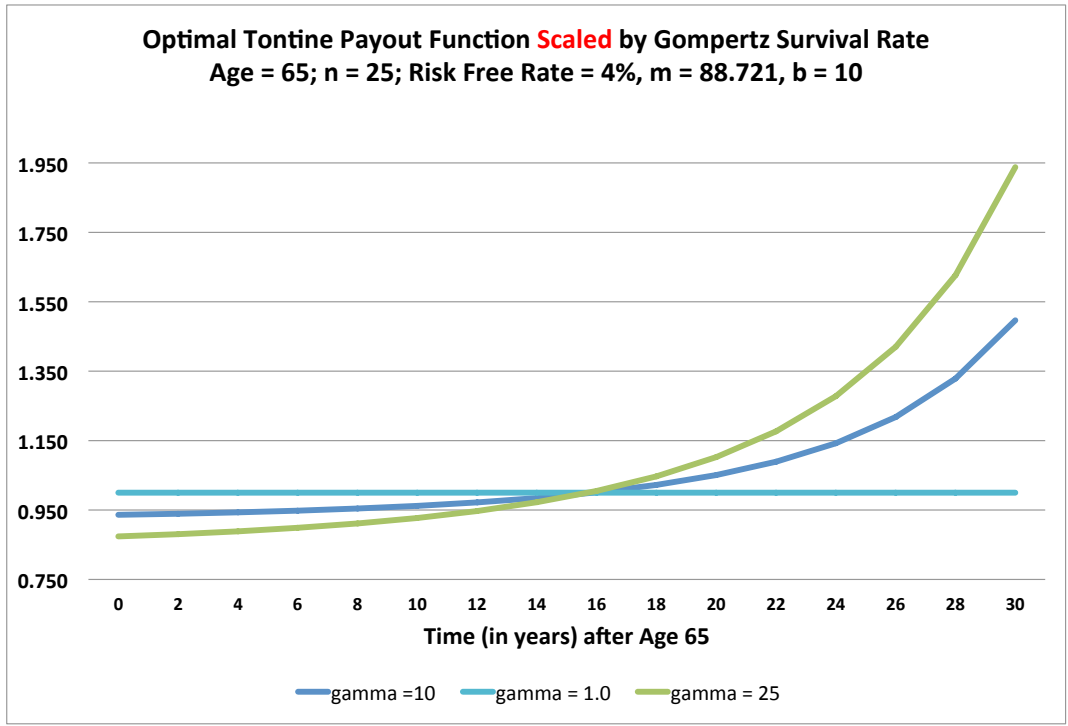} 
\caption{For a relatively smaller tontine pool size, the retiree who is highly averse to longevity risk $\gamma=25$ will want a guaranteed minimum payout rate (GMPR) at advanced ages that is higher than his or her projected survival probability. In exchange they will accept lower GMPR at lower ages. In contrast, the logarithmic utility maximizer will select a GMPR that is exactly equal to the projected survival probability.}
\label{fig6}
\end{center}
\end{figure}

\newpage

\begin{figure}[here]
\begin{center}
\includegraphics[width=1.0\textwidth]{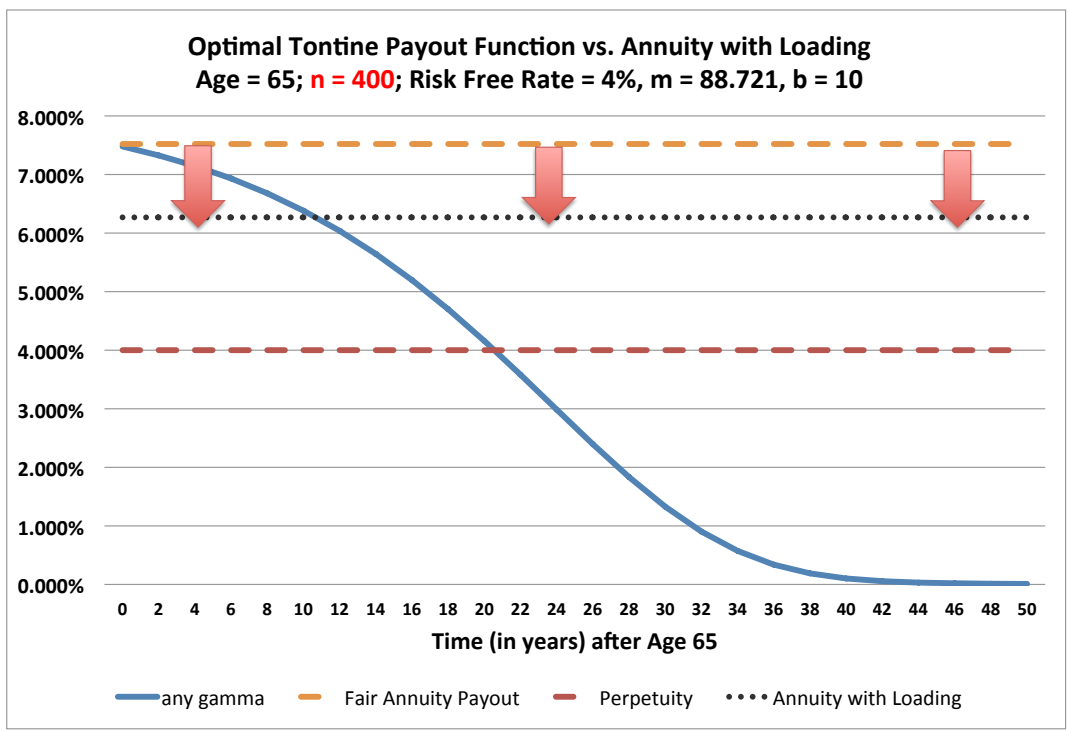} 
\caption{An actuarially fair life annuity that guarantees 7.5\% for life starting at age 65 provides more utility than an optimal tontine regardless of Longevity Risk Aversion (LoRA) or the size of the tontine pool. But, once an insurance loading is included, driving the yield under the initial payout from the optimal tontine, the utility of the life annuity might be lower. The indifference loading is $\delta$ and reported in table \ref{table07} }
\label{fig7}
\end{center}
\end{figure}

\end{document}